\documentclass[11pt]{article}

\usepackage[T1]{fontenc}
\usepackage{lmodern}
\usepackage{microtype}
\usepackage{authblk}
\usepackage[margin=1.08in]{geometry}
\usepackage{amsmath,amssymb,amsthm,mathtools}
\usepackage{booktabs}
\usepackage{tabularx}
\usepackage{ragged2e}
\usepackage{enumitem}
\usepackage{xcolor}
\usepackage{hyperref}
\usepackage[capitalise,noabbrev]{cleveref}
\hypersetup{
	colorlinks=true,
	linkcolor=blue!55!black,
	citecolor=blue!55!black,
	urlcolor=blue!55!black,
	pdfauthor={Daqing Wan, Jun Zhang},
	pdftitle={NP-Hardness of Bounded Distance Decoding for Reed-Solomon Codes}
}

\allowdisplaybreaks
\setlist[itemize]{leftmargin=1.6em,itemsep=0.25em,topsep=0.35em}
\setlist[enumerate]{leftmargin=1.9em,itemsep=0.25em,topsep=0.35em}

\newtheorem{theorem}{Theorem}[section]
\newtheorem{lemma}[theorem]{Lemma}
\newtheorem{proposition}[theorem]{Proposition}

\theoremstyle{definition}
\newtheorem{definition}[theorem]{Definition}

\theoremstyle{remark}
\newtheorem{remark}[theorem]{Remark}

\newcommand{\F}{\mathbb{F}}
\newcommand{\Kfield}{\mathbb{K}}

\newcommand{\D}{\mathcal{D}}
\newcommand{\C}{\mathcal{C}}
\newcommand{\MSS}{\operatorname{MSS}}
\newcommand{\RS}{\operatorname{RS}}

\DeclareRobustCommand{\RSBDD}{\textup{RS-BDD}}
\newcommand{\Span}{\operatorname{span}}
\newcommand{\dist}{\operatorname{dist}}

\newcommand{\poly}{\operatorname{poly}}
\newcommand{\eps}{\varepsilon}
\newcommand{\one}{\mathbf{1}}
\newcolumntype{Y}{>{\RaggedRight\arraybackslash}X}

\title{\textbf{NP-Hardness of Bounded Distance Decoding\\[0.18em]
		for Reed--Solomon Codes}\thanks{AI assistance disclosure: The arithmetic input is due to the authors. OpenAI\textquoteright s ChatGPT Pro was used to assist with the development and exposition of the reduction, drafting and revision of the manuscript, and checks of mathematical consistency and references. This assistance included proposed proof steps and editorial suggestions. The authors are responsible for the mathematical content, the accuracy of the references, and the final manuscript.}}
\author{Daqing Wan and Jun Zhang}

\date{}

\begin{document}
	\maketitle
	
	\begin{abstract}
		For an $[n,K]$ Reed--Solomon code, the covering radius is $n-K$.  Gandikota, Ghazi, and Grigorescu proved deterministic NP-hardness of bounded-distance decoding when the decoding radius is $d$ below the covering radius for every $1\le d\le c\log n/\log\log n$, where $c>0$ is an absolute constant.  We prove that, for every fixed rational $0<\alpha<1/2$, bounded-distance decoding is NP-complete under deterministic polynomial-time many-one reductions over explicitly represented finite extension fields for the additive gap $d=\lfloor n^\alpha\rfloor$ below the covering radius.  The hard codes have odd block length~$n$, dimension
		$K=(n+1)/2-d$, decoding radius $(n-1)/2$, and rate tending to~$1/2$. The alphabet size is subexponential in the evaluation set size: for a fixed $0<\eta<1$ depending only on~$\alpha$, it is $2^{\Theta(n^\eta\log n)}=2^{o(n)}$.
		
		The proof passes through moments subset sum on $n-1$ nonzero field elements, with required subset size $(n-1)/2$ and $d$ prescribed moments.  The arithmetic ingredient is a uniform positive-completion theorem over prime fields $\F_q$ with $q\ge d^{2+\rho}$, for any fixed $\rho>0$.  A sharper form follows from a higher-dimensional point-count estimate based on Deligne's theorem; the weaker form used in our reduction is proved more elementarily using additive-character orthogonality, the one-variable Weil bound, a moment identity of order~$2d$, and Newton identities.  A universal completion pool, an extension-field quotient construction, and a deterministic linear-size simultaneous power condenser complete the reduction.
	\end{abstract}

	\section{Introduction}
	
	Reed--Solomon codes are among the most classical algebraic error-correcting codes~\cite{ReedSolomon1960}.  For distinct evaluation points
	$\boldsymbol{a}=(a_1,\ldots,a_n)$ in a finite field~$\F$, the Reed--Solomon code of dimension~$K$ is
	\begin{equation}
	\RS_K(\boldsymbol{a})
	=
	\left\{
	\bigl(f(a_1),\ldots,f(a_n)\bigr)
	: f\in\F[T],\ \deg f<K
	\right\}.
	\label{eq:rs-definition}
	\end{equation}
	It is an $[n,K,n-K+1]$ maximum-distance-separable code, and its covering radius is $n-K$.  Classical unique decoding corrects up to $\lfloor(n-K)/2\rfloor$ errors, while Guruswami--Sudan list decoding reaches the Johnson radius
	\begin{equation}
	n-\sqrt{n(K-1)}
	\end{equation}
	up to the usual strict inequality and integral rounding~\cite{GuruswamiSudan1999}.  At constant rate $R=K/n$, this gives relative radius approximately $1-\sqrt R$, whereas the covering radius is $(1-R)n$.  Understanding the computational complexity between these two scales has been a longstanding problem.
	
	The field and the evaluation set matter to this question.  Ben-Sasson, Kopparty, and Radhakrishnan exhibited full-length Reed--Solomon codes with superpolynomially large lists beyond the Johnson bound in certain extension-field regimes~\cite{BenSassonKoppartyRadhakrishnan2010}.  Such examples obstruct enumeration of the entire list, but do not by themselves imply hardness of finding one nearby codeword or deciding whether one exists.  In the other direction, recent work shows that randomly punctured Reed--Solomon codes attain combinatorial list-decoding capacity even over linear-sized alphabets~\cite{AlrabiahEtAl2025}; a bound on list size alone does not provide a decoding algorithm.  Brakensiek, Chen, Putterman, Zhang, and Zheng~\cite{Br+26} have now given a deterministic polynomial-time list-decoding algorithm beyond the Johnson radius over prime fields in the low constant-rate regime.  More precisely, for fixed $\theta\in(0,1)$ and sufficiently small constant $\eps>0$, their algorithm corrects a $1-\eps$ fraction of errors when $R\le(1-\theta)\eps$, for arbitrary evaluation sets over suitable prime fields of size $O(n)$, including full-length codes for sufficiently large lengths.  This result does not address exact nearest-codeword decoding or a gap below the covering radius that vanishes as a fraction of~$n$.
	
	Throughout this paper, \emph{bounded-distance decoding} means the following threshold decision problem, without a unique-decoding promise: given a received word~$y$ and a radius~$\tau\le n-K$, decide whether
	\begin{equation}
	\dist\bigl(y,\RS_K(\boldsymbol a)\bigr)\le \tau.
	\end{equation}
	It is convenient to write
	\begin{equation}
	\tau=(n-K)-d,
	\label{eq:intro-gap-parameter}
	\end{equation}
	so that~$d$ measures the gap below the covering radius.  This decision problem is distinct from search decoding, which asks for one codeword in the ball, and list decoding, which asks for all of them.  At $d=0$ the decision problem is trivial and a codeword can be found by interpolation.  At $d=1$, however, Guruswami and Vardy proved NP-hardness~\cite{GuruswamiVardy2005}; equivalently, non-deep-hole recognition is NP-complete and its complement, deep-hole recognition, is coNP-complete.  Cheng and Murray subsequently studied the deep-hole problem and its connection with subset sum~\cite{ChengMurray2007}.
	
	These NP-hardness results allow the evaluation points to be part of the input, and their constructions use only a small subset of the ambient field.  They do not establish NP-hardness for the standard primitive code, whose evaluation set is $\F_Q^\times$, or for the extended full-length code, whose evaluation set is $\F_Q$.  Here $Q$ denotes the alphabet size.  The distinction is substantive: prescribing the entire evaluation set removes the freedom to encode a combinatorial instance in its choice, and filling out a sparse hard instance can require exponentially many coordinates.
	
	Cheng and Wan obtained a complementary hardness result for these prescribed full-length evaluation sets~\cite{ChengWan2007}.  For suitable integers $h$ and~$g$, they gave a randomized oracle reduction from discrete logarithms in $\F_{Q^h}^{\times}$ to search bounded-distance decoding of the $[Q,g-h]_Q$ Reed--Solomon code at radius $Q-g$.  A concrete specialization takes $Q>(h+2)^4$, $g=4h+4$, and hence dimension $3h+4$ and radius $Q-4h-4$.  In our gap notation, $d=h$.  Their factorization theorem represents each nonzero extension-field element as a product of exactly $g$ distinct factors $\xi-a$, with $a\in\F_Q$ and $\F_Q(\xi)=\F_{Q^h}$.  A decoding oracle supplies such factorizations, which yield the multiplicative relations needed for index calculus.  The reduction has overhead polynomial in~$Q$; choosing $Q$ polynomially bounded in~$h$ also makes this polynomial in the discrete-logarithm input length $h\log Q$.
	
	The initial construction has vanishing rate.  Cheng and Wan later removed this restriction, proving discrete-logarithm hardness of maximum-likelihood decoding for explicit full-length families of any prescribed limiting rate in $(0,1)$~\cite{ChengWan2010}.  Their positive-rate construction uses a proper subfield and additional distinct factors to raise the agreement count.  Its special received words have exact distance $Q-g=(Q-K)-h$, so a search decoder at that radius suffices.  Thus additive gaps growing as a fixed positive power of the block length already occur in this conditional search-hardness framework; the issue addressed here is an additive gap of order $n^\alpha$ under deterministic NP-hardness reductions.\footnote{Augot and Morain developed the decoding--index-calculus connection algorithmically~\cite{AugotMorain2012}.  Franco Garrido and Chailloux revisit it in a recent preprint on quantum decoding reductions, and also discuss NP-hardness at vanishing rates for constructed evaluation sets~\cite{FrancoGarridoChailloux2026}.}
	
	Discrete-logarithm hardness should also be distinguished from NP-hardness and from an unconditional running-time lower bound.  In particular, the quasipolynomial finite-field discrete-logarithm algorithms of Kleinjung and Wesolowski~\cite{KleinjungWesolowski2022} apply in relevant polynomially related parameter regimes.  This does not invalidate the Cheng--Wan reduction, but it limits the quantitative hardness evidence available from it.  Moreover, their decoding instances are promised to have a codeword at the specified radius, so the reduction is a search-hardness statement, rather than an NP-hardness proof for the threshold decision problem studied here.
	
	For deterministic NP-hardness, Gandikota, Ghazi, and Grigorescu first extended the gap from $d=1$ to $d=2,3$~\cite{GandikotaGhaziGrigorescu2015}, and then obtained the first result with a growing gap below the covering radius~\cite{GandikotaGhaziGrigorescu2018}.  They proved that there is an absolute constant $c>0$ such that Reed--Solomon bounded-distance decoding is NP-hard for every integer
	\begin{equation}
	1\le d\le c\frac{\log n}{\log\log n}
	\label{eq:ggg-range}
	\end{equation}
	under deterministic polynomial-time reductions, for constant-rate codes over sufficiently large prime fields with constructed evaluation sets.  They also obtained a gap of order $\log n$ under quasipolynomial-time reductions.  Their work explicitly asked whether one can prove NP-hardness at a substantially larger distance below the covering radius.
	
	Our main theorem gives an additive gap of order $n^\alpha$ below the covering radius for every fixed rational $0<\alpha<1/2$.
	
	\begin{theorem}[Main theorem]
		\label{thm:main-bdd}
		For every fixed rational $0<\alpha<1/2$, bounded-distance decoding is NP-complete under deterministic polynomial-time many-one reductions even when the input code is an $[n,K]$ Reed--Solomon code over an explicitly represented finite extension field with odd block length~$n$ and
		\begin{equation}
		d=\lfloor n^\alpha\rfloor,
		\qquad
		K=\frac{n+1}{2}-d.
		\label{eq:intro-code-parameters}
		\end{equation}
		The decoding radius is
		\begin{equation}
		\tau=\frac{n-1}{2}
		=(n-K)-d.
		\end{equation}
		In particular, the code rate tends to $1/2$, and the gap below the covering radius is exactly $\lfloor n^\alpha\rfloor$. The hard instances can be chosen over fields of size $2^{\Theta(n^\eta\log n)}=2^{o(n)}$, where $0<\eta<1$ is a constant depending only on~$\alpha$.
	\end{theorem}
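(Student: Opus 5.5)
Membership in NP is immediate: a certificate is a polynomial $f$ with $\deg f<K$ (equivalently, a set of $K$ agreement positions), and we check $\dist(y,(f(a_i))_i)\le\tau$ by field arithmetic in the explicit extension field. For hardness we follow the route sketched in the abstract, reducing from a moments subset sum problem. Given distinct nonzero field elements $b_1,\dots,b_{m}$ with $m=n-1$, a target size $k=(n-1)/2$, and targets $s_1,\dots,s_d$, the question is whether some $S$ with $|S|=k$ satisfies $\sum_{i\in S}b_i^j=s_j$ for $1\le j\le d$. The standard syndrome/interpolation argument (as in Guruswami--Vardy, Cheng--Murray and Gandikota--Ghazi--Grigorescu) gives the translation to decoding. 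We take evaluation set $\{0,b_1,\dots,b_m\}$ and the received word $y_i=g(a_i)$, where $g$ is monic of degree $n-1-d$ or similar. Then a codeword agreeing with $y$ on $n-\tau=(n+1)/2=K+d$ positions exists iff there is a $(K+d)$-subset $T$ such that $\prod_{i\in T}(X-a_i)$ has prescribed top $d$ coefficients (after dividing off the degree-$<K$ part). By Newton's identities, prescribing the top $d$ elementary symmetric functions of $T$ is equivalent to prescribing its first $d$ power sums, since $d<p$ in a field of characteristic exceeding $d$. The role of the point $0$ and the parity choice is only bookkeeping to hit $k=(n-1)/2$ with $d$ moments, so the code parameters $K=(n+1)/2-d$ and $\tau=(n-1)/2$ fall out exactly.

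The core is showing NP-hardness of this balanced moments subset sum with $d=\lfloor n^\alpha\rfloor$ moments and exactly half the elements chosen. We start from a known NP-hard moments subset sum instance with few elements $b_1,\dots,b_{m_0}$, where $m_0$ is polynomial and small relative to $n$, posed over a prime field $\F_q$ with $q\ge d^{2+\rho}$. We pad it with a large universal completion pool $P$ of field elements. The positive-completion theorem, which we assume as stated in the paper, says that for any partial subset $S_0$ of the hard instance, the residual moment vector $(s_j-\sum_{S_0}b^j)_j$ together with the residual size can always be realized by a subset of $P$ of the required cardinality, uniformly. Hence the padded instance is a yes-instance iff the original one is. That theorem is proved with character sums, the Weil bound, and a $2d$-th moment identity. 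The obstruction is that the pool must have size about $q$, while $q$ must be at least $d^{2+\rho}$, and the hard core must be tiny. So we choose the original instance size $m_0\approx n^{\eta'}$, and we set $d=\lfloor n^\alpha\rfloor$ with $n\approx q$ up to constants, requiring $\alpha(2+\rho)<1$. This is exactly where $\alpha<1/2$ enters.

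The remaining technical issues are embedding and condensing. The NP-hard source problem naturally lives with few moments and arbitrary field elements. To obtain $d$ moments over a field that also contains a full pool, we use the extension-field quotient construction: work in $\F_{q^r}$ presented as $\F_q[X]/(\pi)$ with $\pi$ irreducible, found deterministically in polynomial time via Shoup-type algorithms since $q$ is small. We use a linear-size simultaneous power condenser to pack the $d$ moment constraints over the extension into $d$ constraints whose pool-part lives in $\F_q$, so that the completion theorem applies coordinatewise. The field size $q^r$ with $r\approx n^\eta$ gives alphabet $2^{\Theta(n^\eta\log n)}$. I expect this condenser and pool interplay to be the main obstacle. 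It must guarantee that pool elements contribute only to the "free" directions, while the core elements' moments are faithfully preserved, all with only $O(n)$ elements. My first attempt would be to choose the core elements as $\xi$-shifted elements $\xi+c$, with $\xi$ generating the extension, and pool elements in $\F_q$, and to separate the moment equations along a basis of $\F_{q^r}$ over $\F_q$. Finally we pad $d$ and $n$ with dummy forced elements so that $n$ is odd and $d=\lfloor n^\alpha\rfloor$ exactly.
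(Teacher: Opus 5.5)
Your NP-membership argument is fine, but the hardness reduction has a genuine gap at soundness. As you state it, the completion property is universal: for \emph{every} partial subset $S_0$ of the core, the residual moments and residual size can be filled from the pool $P$. Then every padded instance is a yes-instance (complete, e.g., $S_0=\emptyset$), so you get completeness only and the claimed ``iff'' fails. The missing idea is a constraint the pool provably cannot influence, while the pool stays universal for everything else.

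In the paper the source is not a moments problem at all. It is Exact 1-in-3-SAT written as vector subset sum in $\F_q^t$ (\Cref{prop:one-in-three-vector-sum}), stored in the quotient $\Kfield/H$. Here $\Kfield=\F_{q^{2t+1}}$ and $H=\Span_{\F_q}\{1,\theta,\dots,\theta^t\}$ satisfies $\Span_{\F_q}(H\cdot H)=\Kfield$. The core elements are lifts $\ell_{i,b}=\sigma(v_{i,b})\notin H$, and the pool is $D=\bigsqcup_{\beta\in B}\beta\F_q^\times$ with $B\subset H$ the condenser directions. Soundness uses only $\pi$ applied to the first moment, and $\pi$ kills every auxiliary subset. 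The higher-moment equations are made vacuous: the $\beta^j$ ($2\le j\le d$) span $\Kfield$, so the pool absorbs \emph{arbitrary} core residuals $r_2,\dots,r_d$. The residual $r_1$ must be absorbed only when it lies in $H$, i.e.\ for a satisfying assignment. That restricted universality is exactly \Cref{thm:balanced-completion}: write $r_j=\sum_\beta\eps_\beta\beta^jc_{\beta,j}$, apply \Cref{thm:uniform-completion} on each line, and use complements on $B_-$.

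Your concrete attempt (core elements $\xi+c$, pool in $\F_q$) runs the other way and cannot work. Pool elements and all their powers lie in $\F_q$, so the pool contributes nothing to the components of the moment equations in $\Kfield/\F_q$, and all those constraints fall on the core alone. If $1,\xi,\dots,\xi^d$ are independent, the coefficient of $\xi^s$ ($s\ge1$) in $\sum_{c\in S_0}(\xi+c)^j$ is $\binom js\sum_{c\in S_0}c^{j-s}$. So the targets prescribe $|S_0|$ and $\sum_{c\in S_0}c^i$ for all $0\le i\le d-1$. The core must therefore solve a $(d-1)$-moment subset-sum problem with polynomially large $d$ by itself, which is what you are trying to prove hard; the known GGG source only reaches $d=O(\log n/\log\log n)$. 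In the paper's language, your pool sits in the proper subspace $\F_q$, but $\Span(\F_q\cdot\F_q)=\F_q\ne\Kfield$, violating the second requirement in \eqref{eq:two-H-requirements}. Since each line $\beta\F_q^\times$ contributes only $\F_q\beta^j$ to the $j$th moments, at least $[\Kfield:\F_q]>t$ directions are needed. This gives $n=\Theta(tq)$ rather than $n\approx q$, though the exponent bookkeeping still works with $d=t^\nu$ and $\nu$ large.

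Two smaller points. Instead of padding with ``dummy forced elements'', the paper gets odd $n$ from $|D|=|B|(q-1)$ being even, and fixes $d=\lfloor n^\alpha\rfloor$ by truncating the moment list. Truncation preserves completeness and leaves soundness untouched. Also, in your RS step with evaluation set $\{0,b_1,\dots\}$ and monic $g$ of degree $K+d$, an agreement set of $k+1$ \emph{nonzero} points would also be accepted. \Cref{prop:mss-to-rs} avoids this by evaluating at the inverses $a_i^{-1}$, so the reversed polynomial is monic of degree $k$ and agreement at $0$ is forced.
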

	
	The alphabet-size bound is a further quantitative feature of the construction. In the growing-gap result of Gandikota, Ghazi, and Grigorescu, the fields have size $2^{\poly(n)}$, described there as exponentially large~\cite[Theorem~1.1 and the discussion following it]{GandikotaGhaziGrigorescu2018}. Here we obtain the explicit subexponential bound
	\begin{equation}
	Q=|\Kfield|=2^{\Theta(n^\eta\log n)}=2^{o(n)},
	\qquad 0<\eta<1,
	\label{eq:intro-alphabet-size}
	\end{equation}
	together with the additive gap of order $n^\alpha$. The calculation appears in \Cref{subsec:size-running-time}. This bound is still superpolynomial in~$n$; NP-hardness over polynomial-size alphabets remains open in this parameter regime.
	
	The evaluation set in \Cref{thm:main-bdd} is constructed by the reduction inside a larger extension field; it is not the full field.  Our result therefore strengthens the deterministic NP-hardness line of work without resolving NP-hardness for standard primitive or extended full-length Reed--Solomon codes.  It also does not improve the prime-field gap in~\eqref{eq:ggg-range}.  The improvement is additive: since $d=o(n)$ and $K/n\to1/2$, the relative hard radius still tends to $1/2$, while the Johnson radius tends to $1-1/\sqrt2$.  These parameter distinctions also separate our theorem from the low-rate prime-field algorithm of~\cite{Br+26}.
	
	The intermediate problem in the proof is \emph{moments subset sum}.  Given a set $A\subseteq\F$, a cardinality~$k$, and targets $m_1,\ldots,m_d$, one asks whether some $k$-element subset $S\subseteq A$ satisfies
	\begin{equation}
	\sum_{x\in S}x^j=m_j,
	\qquad 1\le j\le d.
	\label{eq:mss-intro}
	\end{equation}
	For $d=1$ this is fixed-cardinality subset sum over a finite field.  When the characteristic exceeds~$d$, Newton identities convert the first $d$ power sums into the first $d$ elementary symmetric functions, giving a direct reduction from moments subset sum to Reed--Solomon decoding at a radius exactly $d$ below the covering radius.  Gandikota, Ghazi, and Grigorescu introduced this formulation and proved hardness in the range~\eqref{eq:ggg-range}.
	
	The \emph{Prouhet--Tarry--Escott (PTE) problem} asks for two distinct multisets of integers $X=\{x_1,\ldots,x_s\}$ and $Y=\{y_1,\ldots,y_s\}$ of the same cardinality such that
	\begin{equation}
	\sum_{i=1}^{s}x_i^j=\sum_{i=1}^{s}y_i^j,
	\qquad 1\le j\le d.
	\label{eq:pte-definition}
	\end{equation}
	The finite-field version imposes these identities in a finite field; an inhomogeneous version prescribes their differences instead of requiring them to vanish. The construction in~\cite{GandikotaGhaziGrigorescu2018} uses explicit PTE-type gadgets to match higher moments while preserving the Boolean choices.  Exponential growth of the gadgets in~$d$, together with their bit-size requirements and the need to exclude unintended auxiliary subsets, obstructs an additive gap growing as a fixed positive power of~$n$ in that approach; see \Cref{sec:history}.
	
	We prove the following stronger intermediate result.  To keep the coding parameter uniform, we write $n=|A|+1$ for an MSS universe~$A$; this is exactly the block length produced by the MSS-to-Reed--Solomon reduction.
	
	\begin{theorem}[Polynomially many moments]
		\label{thm:main-mss}
		For every fixed rational $0<\alpha<1/2$, the following restriction of moments subset sum is NP-complete under deterministic polynomial-time many-one reductions:
		\begin{itemize}
			\item the field $\F$ is an explicitly represented finite extension field of characteristic larger than~$d$;
			\item the input set~$A$ has even size $n-1$ and contains no zero;
			\item the required subset size is exactly $(n-1)/2$;
			\item the number of moments is $d=\lfloor n^\alpha\rfloor$.
		\end{itemize}
		The hard instances can be chosen over fields of size $2^{\Theta(n^\eta\log n)}=2^{o(n)}$ for a constant $0<\eta<1$ depending only on~$\alpha$.
	\end{theorem}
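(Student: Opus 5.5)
Membership in NP is immediate: a certificate is the subset $S$, and verifying the $d$ power sums takes $O(n d)$ field operations. For hardness I will reduce from a fixed-cardinality subset-sum or exact-cover-type NP-complete problem with polynomial bit size. The plan follows the route indicated in the abstract.

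\emph{Step 1: base instance.} Start from a hard instance with $N$ Boolean choices, encoded as a one-moment subset-sum instance over a prime field $\F_q$. I want to lift it to $d$ moments without PTE gadgets. The idea is to append a \emph{completion pool}: a large set $P\subseteq\F_q^\times$ whose size is chosen so that $|A|=n-1$ and the required size is $(n-1)/2$. It must have the property that, whatever the choice on the base part, the residual targets for moments $2,\dots,d$ can be met by some subset of $P$ of the correct cardinality.

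\emph{Step 2: positive completion.} This is the arithmetic core, and I expect it to be the main obstacle. For $q\ge d^{2+\rho}$ I would prove that every target vector $(m_1,\dots,m_d)$ is realized by a $k$-subset of the pool, for all $k$ in a suitable middle range. Following the abstract, I would count solutions via additive-character orthogonality:
\[
\#\{S\}=q^{-d}\sum_{\psi}\sum_{S}\psi\Bigl(\sum_j c_j\bigl(p_j(S)-m_j\bigr)\Bigr),
\]
where $p_j(S)=\sum_{x\in S}x^j$. Each nontrivial frequency gives a character sum of the nonconstant polynomial $\sum_j c_jx^j$ over the pool. The one-variable Weil bound controls this by $(d-1)\sqrt q$, which is much less than $q$ once $q\ge d^{2+\rho}$. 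A moment identity of order $2d$ then bounds the contribution of the subsets, and Newton identities handle the distinctness and cardinality constraints. The main term must dominate, which forces the pool size and $q$ to be polynomially related to $d$. Making this uniform in the targets is the delicate part.

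\emph{Step 3: decoupling the moments.} The base moment must encode the NP instance, while the completion handles the higher moments without letting spurious subsets mix the two. For this I would use an extension-field quotient construction. Work in $\F=\F_{q^r}$ and place the base instance and the pool so that projecting to the quotient $\F/\F_q$ (or taking a trace component) forces the base choices to satisfy the original constraint exactly. In this step the field size grows as $q^{r}=2^{\Theta(n^\eta\log n)}$. A deterministic linear-size simultaneous power condenser will compress the $d$ moment conditions so that the pool has size linear in $d^{O(1)}$, which keeps $n$ polynomial and lets me tune $d=\lfloor n^\alpha\rfloor$ exactly with $\alpha<1/2$. This bound is also where the $\alpha<1/2$ condition enters, through the Weil threshold $q\gtrsim d^{2}$.

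\emph{Step 4: bookkeeping.} Pad with dummy nonzero elements to make $|A|$ even and the subset size exactly half, and check that the characteristic exceeds $d$. Finally, verify soundness: any $(n-1)/2$-subset meeting all the moments restricts, after the quotient map, to a valid base solution.
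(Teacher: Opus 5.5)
There is a genuine gap in Step~3, at the central idea of the reduction. If you quotient by $\F_q$, the pool lies in the kernel $\F_q$. Soundness is then fine, but every power sum of pool elements also lies in $\F_q$. In a YES instance the higher residuals $M_j-\sum_{\ell\in L_0}\ell^j$ left by the chosen base elements are arbitrary elements of the extension field, and they depend on the unknown satisfying assignment. No subset of the pool can absorb them, so completeness fails. Forcing these residuals into $\F_q$ would require PTE-type relations among the base elements, which is exactly the gadget problem you meant to avoid.

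The trace-component variant fails on the other side. A one-dimensional quotient $\F_q$ with $q$ polynomial can only carry fixed-cardinality subset sum over $\F_q$, and dynamic programming solves that in time $O(Nkq)$. Here $q$ must be polynomial, since the pool has size about $q$. For the same reason, your Step~1 base instance over a prime field is not NP-hard.

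What you need is a single $\F_q$-subspace $H$ of the extension $\Kfield$ with two properties at once:
\begin{enumerate}
\item $\Kfield/H\cong\F_q^t$, which stores a $t$-dimensional vector subset-sum instance. The paper encodes Exact 1-in-3-SAT, whose coordinate sums lie in $\{0,1,2,3\}$, so congruence to $1$ modulo $q$ forces equality.
\item $\Span_{\F_q}(H\cdot H)=\Kfield$, so that higher powers of kernel elements reach all of $\Kfield$.
\end{enumerate}
The paper takes $\Kfield=\F_{q^{2t+1}}$ and $H=\Span_{\F_q}\{1,\theta,\ldots,\theta^t\}$.

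Your plan also has no mechanism for turning a completion theorem over $\F_q$ into one over $\Kfield$. The power condenser does not ``compress the moment conditions.'' Its role is to supply $O(t)$ projectively distinct directions $\beta\in H$ with $\Span_{\F_q}B=H$ and $\Span_{\F_q}\{\beta^j:\beta\in B\}=\Kfield$ for every $2\le j\le d$ simultaneously. The pool is the union of the lines $\beta\F_q^\times$. A residual $(r_1\in H,r_2,\ldots,r_d\in\Kfield)$ is written as $r_j=\sum_{\beta}\eps_\beta\beta^jc_{\beta,j}$ with $c_{\beta,j}\in\F_q$. Since $\sum_{a\in S}(\beta a)^j=\beta^j\sum_{a\in S}a^j$, applying the prime-field completion theorem on each line to the vector $(c_{\beta,1},\ldots,c_{\beta,d})$ realizes the residual.

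The exact cardinality does not come from dummy padding; dummy elements would either change the moments or need their own completion argument. It comes from taking $S_\beta$ on half the lines and $\F_q^\times\setminus S_\beta$ on the other half, using $\sum_{a\in\F_q^\times}a^j=0$ for $j<q-1$.

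Finally, the completion theorem is proved for $h$-subsets of $\F_q^\times$ with $h=\lceil Cd\rceil$. The proof counts ordered tuples in $\F_q^h$, so the character sum factors as $W(\boldsymbol{\lambda})^h$ with $W$ a complete sum over $\F_q$. For $k$-subsets of a pool in a ``middle range'' no such factorization is available, and your sketch does not say how the Weil bound and the $2d$-th moment identity would then apply.
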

	
	The arithmetic input is a prescribed-target completion theorem.  Fix $\rho>0$.  For all sufficiently large~$d$, if $q$ is prime with $q\ge d^{2+\rho}$ and $h=C_\rho d+O(1)$ for a suitable constant $C_{\rho}$, then every vector $\boldsymbol c\in\F_q^d$ can be represented by distinct nonzero elements:
	\begin{equation}
	c_j=\sum_{a\in S}a^j,
	\qquad
	1\le j\le d,
	\qquad
	|S|=h.
	\label{eq:intro-completion}
	\end{equation}
	The higher-dimensional estimate in~\cite{Wan2026} permits any constant
	$C_\rho>1+2/\rho$, whereas the more elementary Weil-bound argument used in our reduction permits the weaker range
	$C_\rho>2+2/\rho$.
	Our initial proof used the higher-dimensional point-count estimate for locally dense Reed--Solomon lattices~\cite{Wan2026}.  That estimate ultimately rests on the Weil--Deligne theory for higher-dimensional varieties.  We later found that the complexity reduction needs only a slightly weaker numerical constant and that this weaker form has a more elementary proof using additive-character orthogonality, the classical one-variable Weil bound, an exact moment identity of order~$2d$, and Newton identities~\cite{Schmidt1976}.  We retain the sharper statement for comparison, but the proof of \Cref{thm:main-bdd} uses only the more elementary version.
	
	The completion in~\eqref{eq:intro-completion} is existential, which is enough.  The reduction places a polynomial-size universal pool of possible correcting elements in the instance, and the NP witness chooses a suitable completion.  Soundness is protected independently by an additive quotient.  We work over an extension field $\Kfield/\F_q$ and choose a proper $\F_q$-subspace $H\subset\Kfield$ satisfying
	\begin{equation}
	\Span_{\F_q}(H\cdot H)=\Kfield.
	\label{eq:quotient-properties}
	\end{equation}
	The Boolean source instance is stored in $\Kfield/H$, while all auxiliary elements lie in~$H$.  Hence every auxiliary subset disappears from the first moment after quotienting, whereas higher powers of directions in~$H$ can span the full extension field.  A deterministic simultaneous power condenser supplies only $O([\Kfield:\F_q])$ such directions.
	
	The extension-field hypothesis is essential to this particular soundness mechanism.  Viewed as a vector space over itself, a prime field has no nonzero proper additive subspace: taking $H=0$ leaves no auxiliary pool, while taking $H=\F_q$ destroys the quotient information.  Thus our argument does not improve the prime-field result of Gandikota, Ghazi, and Grigorescu; a different soundness mechanism would be required.
	
	Two further challenges remain: preserving NP-hardness with a full-length evaluation set, and obtaining a gap $d=\Omega(n)$ at constant rate.  Even the Cheng--Wan representation mechanism has a counting limitation for the latter goal.  If every element of $\F_{Q^h}^{\times}$ is represented by a product indexed by a $g$-element subset of~$\F_Q$, then necessarily
	\begin{equation}
	Q^h-1\le\binom{Q}{g}\le2^Q,
	\qquad h=O(Q/\log Q).
	\label{eq:intro-cw-counting}
	\end{equation}
	Since its gap is $d=h$ and its length is $Q$, improving the character-sum estimates alone cannot make this gap a positive fraction of the length.  Our completion method encounters a related subset-counting constraint, discussed in \Cref{sec:perspective}, as well as the requirement that every auxiliary subset preserve soundness.  Locating a sharper computational boundary will require attention to these structural constraints and to the separate roles of alphabet size, rate, and list size.
	
	\paragraph{Organization.}
	\Cref{sec:preliminaries} develops the coding-theoretic background and proves the moments-subset-sum to Reed--Solomon BDD reduction.  \Cref{sec:history} reviews the earlier BDD result and its Prouhet--Tarry--Escott bottleneck.  \Cref{sec:completion} proves the elementary completion theorem and records the sharper form.  \Cref{sec:extension-algebra} develops the extension-field quotient construction and deterministic simultaneous power condenser.  \Cref{sec:universal-pool} builds the balanced universal completion pool.  \Cref{sec:hardness} proves the intermediate moments-subset-sum theorem, and \Cref{sec:rs-consequences} completes the proof of the main Reed--Solomon BDD theorem.  \Cref{sec:prime-field} explains the extension-field requirement, and \Cref{sec:perspective} compares the coding-theoretic consequences with earlier work and records open problems.
	
	\paragraph{Acknowledgments.}
	The authors would like to thank Qi Cheng for his careful reading and constructive comments on an earlier version of this paper.
	
	\section{Coding and algebraic preliminaries}
	\label{sec:preliminaries}
	
	\subsection{Reed--Solomon codes, BDD, and deep holes}
	
	For a code $\C\subseteq\F^n$ and a word $y\in\F^n$, write
	\begin{equation}
	\dist(y,\C)=\min_{c\in\C}\Delta(y,c),
	\end{equation}
	where $\Delta$ denotes Hamming distance.  Interpolating a polynomial of degree less than~$K$ through any $K$ positions shows
	\begin{equation}
	\dist\bigl(y,\RS_K(\boldsymbol{a})\bigr)\le n-K.
	\end{equation}
	This bound is attained.  Indeed, if $g$ has degree exactly~$K$, then the received word
	$y_i=g(a_i)$ agrees with a codeword arising from a polynomial $f$ of degree less than~$K$ in at most $K$ positions, because $g-f$ has degree~$K$.  Hence the covering radius is exactly $n-K$.  A word is a deep hole if equality holds.
	
	Following~\cite{GandikotaGhaziGrigorescu2018}, it is convenient to parameterize decoding by the gap below the covering radius.
	
	\begin{definition}[\RSBDD{} with parameter $d$]
		An instance consists of an $[n,K]$ Reed--Solomon code and a received word~$y$.  The question is whether
		\begin{equation}
		\dist(y,\RS_K)\le (n-K)-d.
		\label{eq:rs-bdd-definition}
		\end{equation}
		Equivalently, does a polynomial of degree less than~$K$ agree with~$y$ in at least $K+d$ evaluation positions?
	\end{definition}
	
	At $d=1$, a no-instance is exactly a deep hole, while a yes-instance is a non-deep-hole.  The reduction of Guruswami--Vardy and Cheng--Murray therefore gives NP-completeness of non-deep-hole recognition and coNP-completeness of deep-hole recognition for Reed--Solomon codes~\cite{GuruswamiVardy2005,ChengMurray2007}.
	
	\subsection{Moments subset sum and finite-field PTE}
	
	\begin{definition}[Moments subset sum]
		Let $\F$ be a finite field.  An instance of $\MSS(d)$ consists of a finite set
		$A\subseteq\F$, an integer $k\le |A|$, and targets
		$m_1,\ldots,m_d\in\F$.  The question is whether there is a subset
		$S\subseteq A$ with $|S|=k$ and
		\begin{equation}
		\sum_{x\in S}x^j=m_j,
		\qquad 1\le j\le d.
		\label{eq:mss-definition}
		\end{equation}
	\end{definition}
	
	When $d=1$, this is the fixed-cardinality finite-field subset-sum problem.  For larger~$d$, two subsets $S,T$ in the same moment fiber satisfy
	\begin{equation}
	\sum_{x\in S}x^j=\sum_{y\in T}y^j,
	\qquad 1\le j\le d.
	\label{eq:pte-fiber}
	\end{equation}
	Canceling $S\cap T$ produces disjoint finite-field PTE sets.  Prescribing a nonzero difference of moment vectors gives an inhomogeneous PTE system.
	
	The problem belongs to NP for explicitly represented finite fields: a subset is a certificate, and all moment equations can be checked with polynomially many field operations.
	
	\subsection{A self-contained reduction from MSS to \RSBDD{}}
	\label{subsec:mss-to-rs}
	
	We recall and prove the reduction of~\cite{GandikotaGhaziGrigorescu2018}.  The formulation below is sufficient for our instances, whose elements are all nonzero.
	
	\begin{lemma}[Newton conversion]
		\label{lem:newton-conversion}
		Suppose $\operatorname{char}(\F)>d$.  From targets
		$m_1,\ldots,m_d$ one can compute, in polynomial time, targets
		$E_1,\ldots,E_d$ such that a finite set~$S$ satisfies
		\eqref{eq:mss-definition} if and only if its first $d$ elementary symmetric functions equal $E_1,\ldots,E_d$.
	\end{lemma}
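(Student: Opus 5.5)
We use Newton's identities. For a finite set $S\subseteq\F$, write $p_j=\sum_{x\in S}x^j$ and let $e_j$ be the $j$-th elementary symmetric function of the elements of $S$, with $e_0=1$. For every $j\ge1$ the identity
\begin{equation}
j\,e_j=\sum_{i=1}^{j}(-1)^{i-1}e_{j-i}\,p_i
\label{eq:newton-plan}
\end{equation}
holds in $\mathbb{Z}[x_1,\ldots,x_{|S|}]$. It therefore holds in any field after specializing the variables to the elements of $S$. No assumption relating $|S|$ and $j$ is needed: when $j>|S|$ the left side is $j\cdot 0$, and the identity remains valid as a polynomial identity.

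\emph{Construction.} Given $m_1,\ldots,m_d$, define $E_0=1$ and, recursively for $1\le j\le d$,
\begin{equation}
E_j=j^{-1}\sum_{i=1}^{j}(-1)^{i-1}E_{j-i}\,m_i .
\end{equation}
The hypothesis $\operatorname{char}(\F)>d$ ensures that $j\in\{1,\ldots,d\}$ is a nonzero element of the prime subfield, so $j^{-1}$ exists. Each step uses $O(d)$ field operations, so the whole computation takes $O(d^2)$ operations in the explicitly represented field.

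\emph{Forward direction.} Suppose $p_i=m_i$ for all $i\le d$. We show $e_j=E_j$ by induction on $j$. The case $j=0$ holds by definition. For $j\ge1$, both $j\,e_j$ and $j\,E_j$ equal the same expression in $e_0,\ldots,e_{j-1}=E_0,\ldots,E_{j-1}$ and $p_1,\ldots,p_j=m_1,\ldots,m_j$. Since $j$ is invertible, $e_j=E_j$.

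\emph{Converse.} Suppose $e_j=E_j$ for all $j\le d$. Rewrite \eqref{eq:newton-plan} as
\begin{equation}
p_j=(-1)^{j-1}\Bigl(j\,e_j-\sum_{i=1}^{j-1}(-1)^{i-1}e_{j-i}\,p_i\Bigr).
\end{equation}
By induction, $p_j$ is determined by $e_1,\ldots,e_j$ and $p_1,\ldots,p_{j-1}$. The same relation holds between $(E_\bullet)$ and $(m_\bullet)$ by the definition of the $E_j$, so $p_j=m_j$ for all $j\le d$. This direction does not even require inverting $j$.

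The only point needing care is the Newton identity for $j>|S|$, where we use $e_j=0$. This is covered because \eqref{eq:newton-plan} is a polynomial identity over $\mathbb{Z}$ in $|S|$ variables.
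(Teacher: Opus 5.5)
Your proof is correct and takes the same route as the paper. The paper also uses the Newton recursion $jE_j=\sum_{r=1}^{j}(-1)^{r-1}E_{j-r}m_r$, which is solvable because $1,\ldots,d$ are invertible when $\operatorname{char}(\F)>d$, together with the fact that the identities hold for every finite multiset; you only make explicit the two directions of the equivalence and the case $j>|S|$, which the paper leaves implicit.
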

	
	\begin{proof}
		Set $E_0=1$.  Newton's identities give the recursion
		\begin{equation}
		jE_j
		=
		\sum_{r=1}^{j}(-1)^{r-1}E_{j-r}m_r,
		\qquad 1\le j\le d.
		\label{eq:newton-recursion}
		\end{equation}
		Because $1,\ldots,d$ are invertible in~$\F$, this determines the $E_j$ uniquely and efficiently.  The identities hold for every finite multiset, hence for every subset.
	\end{proof}
	
	\begin{proposition}[MSS to Reed--Solomon BDD]
		\label{prop:mss-to-rs}
		Let $A=\{a_1,\ldots,a_{n-1}\}\subseteq\F^\times$ and suppose
		$\operatorname{char}(\F)>d$.  An $\MSS(d)$ instance on~$A$ with required subset size~$k\geq d$ reduces in polynomial time to an \RSBDD{}$(d)$ instance of block length~$n$ and dimension
		\begin{equation}
		K=k-d+1.
		\label{eq:rs-dimension-from-mss}
		\end{equation}
	\end{proposition}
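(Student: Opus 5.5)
We follow the classical Cheng--Murray / Guruswami--Vardy construction, augmented by Newton's identities. First we apply \Cref{lem:newton-conversion} to the targets $m_1,\ldots,m_d$ to obtain $E_1,\ldots,E_d$. A $k$-subset $S\subseteq A$ is then a solution exactly when its monic vanishing polynomial has a prescribed top part:
\begin{equation*}
P_S(T)=\prod_{x\in S}(T-x)=T^k-E_1T^{k-1}+\cdots+(-1)^dE_dT^{k-d}+(\text{terms of degree}<k-d).
\end{equation*}
We set $K=k-d+1$, so that the free lower part has degree at most $k-d=K-1$. Define the fixed polynomial $G(T)=T^k-E_1T^{k-1}+\cdots+(-1)^dE_dT^{k-d}$. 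Then $S$ satisfies \eqref{eq:mss-definition} if and only if $P_S-G$ has degree $<K$.

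Next we turn the condition ``$G-f$ vanishes on $k$ points of $A$'' into an agreement count. For this we use the $n$-th evaluation point $0$, which is available because $A\subseteq\F^\times$. The evaluation vector is $(a_1,\ldots,a_{n-1},0)$. A natural received word is $y_i=G(a_i)$ on $A$, with the value at $0$ chosen so that agreement at $0$ forces the constant term. The agreement target is $K+d=k+1$. The obstacle is that $G-f$ with $\deg f<K$ is not automatically monic of degree exactly $k$ with only simple roots, and a degree-$k$ polynomial can have at most $k$ roots in $A$. Agreement on $k$ points of $A$ alone therefore gives only $k$ agreements, one short of the target. The extra agreement must come from the point $0$.

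To fix the received word so that the count works, we divide by $T$. Take $y_i=G(a_i)/a_i$ for $i\le n-1$, and at $0$ take the coefficient of $T$ in $G$ (equivalently, $y_n=0$ whenever $k-d\ge 2$). A codeword $f$ agrees with $y$ at $a_i$ exactly when $a_i$ is a root of $G(T)-Tf(T)$. This polynomial is monic of degree $k$ with the same top $d+1$ coefficients as $G$, and its free part $Tf$ covers all degrees $1,\ldots,K$. Agreement at $0$ means $(G-Tf)/T$ vanishes at $0$, that is, $T^2\mid G-Tf$. We will recheck the indices carefully; if they do not fit, the backup is to adjust $K$ by one using the hypothesis $k\ge d$, which guarantees $K\ge1$.

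The cleaner alternative, and the one we plan to commit to, is the following. Set $y_i=G(a_i)$ for $i\le n-1$, and at coordinate $0$ set $y_n=G(0)+1$, or equivalently use the leading-coefficient trick. The standard resolution in the literature uses the point at infinity rather than $0$. The paper insists on $0\notin A$, which suggests using the map $x\mapsto 1/x$ so that $0$ plays the role of $\infty$.

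\textbf{Completeness.} If $S$ solves the instance, set $f=G-P_S$. Then $\deg f<K$ and $f$ agrees with $y$ on $S$. We expect the one extra agreement to come from the special point, which must be arranged in the encoding.

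\textbf{Soundness.} If $f$ agrees in $\ge k+1$ positions, then $G-f$ is a monic degree-$k$ polynomial. Among its agreement positions there are at least $k$ roots in $A$, and these roots determine a $k$-subset $S$ with $P_S=G-f$. Hence $S$ has elementary symmetric functions $E_1,\ldots,E_d$, so it solves the MSS instance by \Cref{lem:newton-conversion}.

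Getting the $0$/reciprocal bookkeeping exactly right is the main step, and we would resolve it by reversing polynomials: work with $T^kP_S(1/T)$, whose low-order coefficients are $E_1,\ldots,E_d$, and evaluate at the points $1/a_i$ together with $0$.
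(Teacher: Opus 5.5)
Your proposal has a genuine gap: it never settles on a working encoding, and the one you say you will commit to fails. Take evaluation points $a_1,\ldots,a_{n-1},0$, received values $G(a_i)$, and a codeword $f$ of degree at most $K-1=k-d$. Then $G-f$ is monic of degree $k$, but its $T^{k-d}$-coefficient is $(-1)^dE_d$ minus the leading coefficient of $f$, which is free.

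\begin{itemize}
\item Your soundness paragraph therefore recovers only $E_1,\ldots,E_{d-1}$. The conclusion that the roots of $G-f$ have elementary symmetric functions $E_1,\ldots,E_d$ does not follow.
\item The same off-by-one appears earlier. $S$ is a solution iff $\deg(P_S-G)<k-d=K-1$, not $<K$.
\item Agreement at $0$ with $y_n=G(0)+1$ constrains only $f(0)$, i.e.\ the constant term $P_S(0)=(-1)^k\prod_{x\in S}x$. This is unrelated to $E_d$.
\item Completeness fails: a genuine solution earns its $(k+1)$-st agreement only when $P_S(0)=-1$.
\item Soundness fails: any $k$-set with the correct $E_1,\ldots,E_{d-1}$, a wrong $E_d$, and $P_S(0)=-1$ is accepted via $f=G-P_S$.
\item Your ``divide by $T$'' variant is worse. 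There $Tf$ reaches degree $k-d+1$, so it also overwrites the $E_{d-1}$ coefficient.
\end{itemize}

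The missing idea is exactly the reversal you mention in your last sentence. It is the entire content of the proof, and you do not carry it out.

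The paper's construction:
\begin{itemize}
\item Evaluate at $a_1^{-1},\ldots,a_{n-1}^{-1},0$.
\item At $a_i^{-1}$ place $-f(a_i)$, where $f(T)=T^d-E_1T^{d-1}+\cdots+(-1)^{d-1}E_{d-1}T$. This is the prescribed top part with the $E_d$ term deliberately omitted.
\item At $0$ place $(-1)^dE_d$.
\end{itemize}

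For any $p$ with $\deg p\le k-d$, set $G_p(T)=T^{k-d}\bigl(p(T^{-1})+f(T)\bigr)$. Then:
\begin{itemize}
\item $G_p$ is monic of degree $k$ with top coefficients $1,-E_1,\ldots,(-1)^{d-1}E_{d-1}$.
\item Its $T^{k-d}$-coefficient is exactly $p(0)$.
\item $G_p(a_i)=0$ iff $p$ agrees with the word at $a_i^{-1}$.
\end{itemize}
Hence at most $k$ agreements lie off $0$. So $k+1$ agreements force agreement at $0$, which pins the $T^{k-d}$-coefficient to $(-1)^dE_d$ and yields $G_p=\prod_{a\in S}(T-a)$ for a $k$-set $S$.

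Completeness uses $p(Z)=Z^{k-d}g(Z^{-1})-f(Z^{-1})$ with $g=\prod_{a\in S}(T-a)$. This $p$ has degree at most $k-d$, constant term $(-1)^dE_d$, and agrees with the word at every $a^{-1}$, $a\in S$.

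In reversed coordinates, evaluation at $0$ reads off what was the top coefficient of the free part; this is the ``point at infinity'' you alluded to. Until this is written down, neither direction of your argument is established.
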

	
	\begin{proof}
		Use \Cref{lem:newton-conversion} to obtain elementary-symmetric targets
		$E_1,\ldots,E_d$.  Define
		\begin{equation}
		f(T)
		=
		T^d-E_1T^{d-1}+E_2T^{d-2}
		-\cdots+(-1)^{d-1}E_{d-1}T.
		\label{eq:f-polynomial}
		\end{equation}
		The evaluation set is
		\begin{equation}
		\D=\{a_1^{-1},\ldots,a_{n-1}^{-1},0\}.
		\end{equation}
		At the nonzero position $a_i^{-1}$ put the received value $-f(a_i)$, and at~$0$ put $(-1)^dE_d$.
		
		Suppose first that $S\subseteq A$ is a solution of size~$k$, and let
		\begin{equation}
		g(T)=\prod_{a\in S}(T-a).
		\end{equation}
		The leading $d$ coefficients of~$g$ are determined by
		$E_1,\ldots,E_d$.  Consequently,
		\begin{equation}
		p(Z)=Z^{k-d}g(Z^{-1})-f(Z^{-1})
		\label{eq:p-from-g}
		\end{equation}
		contains no negative powers and has degree at most $k-d=K-1$.  Its constant coefficient is $(-1)^dE_d$.  For every $a\in S$,
		\begin{equation}
		p(a^{-1})=-f(a),
		\end{equation}
		so $p$ agrees with the received word in the $k$ positions indexed by~$S$ and also at~$0$.  Thus it has $k+1=K+d$ agreements.
		
		Conversely, suppose a polynomial $p$ of degree at most $k-d$ has at least $k+1$ agreements.  It must agree at~$0$.  Indeed, if all $k+1$ agreements were at nonzero points, then
		\begin{equation}
		G(T)=T^{k-d}\bigl(p(T^{-1})+f(T)\bigr)
		\label{eq:G-converse}
		\end{equation}
		would be a monic polynomial of degree~$k$ with $k+1$ distinct roots, a contradiction.  Hence $p(0)=(-1)^dE_d$, and at least $k$ nonzero positions agree.  For their corresponding set~$S$, the polynomial~$G$ in \eqref{eq:G-converse} has precisely the prescribed first $d$ leading coefficients.  Its roots in~$S$ therefore have elementary symmetric functions $E_1,\ldots,E_d$.  By \Cref{lem:newton-conversion}, $S$ solves the original moment instance.
	\end{proof}
	
	For $d=1$, the \RSBDD{} radius in \Cref{prop:mss-to-rs} is one below the covering radius.  This makes explicit the passage from finite-field subset sum to non-deep-hole recognition.
	
	\subsection{Exact 1-in-3-SAT and its vector subset-sum encoding}
	\label{subsec:vector-subset-sum}
	
	We use \emph{cubic planar monotone Exact 1-in-3-SAT}, which is NP-complete by Moore and Robson~\cite{MooreRobson2001}. Each clause contains three distinct Boolean variables, with no negations, and every variable occurs in exactly three clauses. Counting variable occurrences in two ways shows that the number of clauses equals the number of variables; let $m$ denote this common number. The variable--clause incidence graph is planar; our reduction does not use planarity. Define the exact-one relation
	\begin{equation}
	\operatorname{ONE}_3(x,y,z)=1
	\quad\Longleftrightarrow\quad
	x+y+z=1
	\qquad (x,y,z\in\{0,1\}).
	\label{eq:one3-relation}
	\end{equation}
	An instance is a conjunction
	\begin{equation}
	\varphi
	=
	\bigwedge_{r=1}^{m}
	\operatorname{ONE}_3
	\bigl(z_{i(r,1)},z_{i(r,2)},z_{i(r,3)}\bigr)
	\label{eq:exact-one-formula}
	\end{equation}
	on variables $z_1,\ldots,z_m$. The question is whether there is an assignment $b=(b_1,\ldots,b_m)\in\{0,1\}^m$ for which exactly one variable is true in each clause. We may restrict to nonempty instances, since the empty instance is trivially satisfiable.
	
	The first $m$ coordinates of the vector encoding record the choice made for each variable, and the last $m$ coordinates count true variables in the clauses. Put
	\begin{equation}
	t=2m.
	\label{eq:vector-dimension}
	\end{equation}
	For a clause $C_r$ and $(i,b)\in[m]\times\{0,1\}$, let
	\begin{equation}
	\chi_r(i,b)=b\,\mathbf{1}[z_i\in C_r].
	\label{eq:variable-satisfaction-indicator}
	\end{equation}
	Thus $\chi_r(i,0)=0$, and $\chi_r(i,1)=1$ exactly when $z_i$ occurs in~$C_r$.
	
	Let $e_i\in\F_q^m$ denote the $i$th standard basis vector.  Associate with the Boolean choice $z_i=b$ the vector
	\begin{equation}
	v_{i,b}
	=
	\bigl(
	e_i\,;
	\chi_1(i,b),\ldots,\chi_m(i,b)
	\bigr)
	\in\F_q^{2m}.
	\label{eq:boolean-choice-vector}
	\end{equation}
	Equivalently, if $a_i\in\{0,1\}^m$ is the clause-incidence vector of~$z_i$, then $v_{i,0}=(e_i;0)$ and $v_{i,1}=(e_i;a_i)$. Both Boolean choices are retained even though the clauses contain no negations. The vectors $v_{i,b}$ are pairwise distinct.  If $i\ne i'$, then their variable-coordinate blocks are the distinct standard basis vectors $e_i$ and $e_{i'}$.  If $i=i'$ and $b\ne b'$, choose a clause containing~$z_i$; exactly one of the two values $b=0,1$ makes that occurrence true, so the corresponding clause coordinate distinguishes $v_{i,0}$ from $v_{i,1}$.  Each $v_{i,b}$ is also nonzero because its variable-coordinate block is~$e_i$.
	
	The target vector is
	\begin{equation}
	T
	=
	\bigl(
	\one_m\,;
	\one_m
	\bigr)
	\in\F_q^{2m}.
	\label{eq:boolean-target-vector}
	\end{equation}
	
	\begin{proposition}[Exact 1-in-3-SAT as vector subset sum]
		\label{prop:one-in-three-vector-sum}
		Assume that $q$ is prime and $q\ge3$.  For a set
		$I\subseteq[m]\times\{0,1\}$, one has
		\begin{equation}
		\sum_{(i,b)\in I}v_{i,b}=T
		\label{eq:vector-subset-sum}
		\end{equation}
		if and only if $I$ contains exactly one pair $(i,b_i)$ for every variable~$i$, and the resulting assignment $(b_1,\ldots,b_m)$ satisfies~$\varphi$ in the exact 1-in-3 sense.  In particular, every solution of \eqref{eq:vector-subset-sum} has cardinality~$m$.
	\end{proposition}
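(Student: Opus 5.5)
The plan is to compare the two blocks of coordinates of \eqref{eq:vector-subset-sum} separately, using the fact that every coordinate sum is a nonnegative integer bounded by $|I|\le 2m$, reduced modulo~$q$.

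\emph{Variable block.} The first $m$ coordinates of $\sum_{(i,b)\in I}v_{i,b}$ equal $\sum_{(i,b)\in I}e_i$. Its $i$th coordinate is the number $n_i$ of pairs in $I$ with first component~$i$, so $n_i\in\{0,1,2\}$. The target coordinate is~$1$. Since $q\ge3$ is prime, the residues $0,1,2$ are distinct modulo~$q$, so $n_i\equiv1\pmod q$ forces $n_i=1$. Hence $I$ contains exactly one pair $(i,b_i)$ for each~$i$, which gives $|I|=m$ and a well-defined assignment $b=(b_1,\ldots,b_m)$.

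\emph{Clause block.} Given the conclusion of the variable block, the $r$th clause coordinate of the sum is
\[
\sum_{i}\chi_r(i,b_i)=\#\{i: z_i\in C_r,\ b_i=1\}.
\]
This is the number of true variables in the clause~$C_r$, an integer in $\{0,1,2,3\}$. Here I use that each clause has three distinct variables, so no variable is counted twice. The clause block therefore equals $\one_m$ exactly when this count is $\equiv1\pmod q$ for every clause. For $q\ge5$ the residues $0,1,2,3$ are distinct, so the count must equal~$1$.

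\emph{The case $q=3$.} This is the one potential obstacle: a count of $3$ would not be congruent to~$1$, since $3\equiv0$, so it is harmless. A count of $4$ would be the dangerous value, but it cannot occur. Concretely, the possible counts $0,1,2,3$ reduce to $0,1,2,0$ modulo~$3$, and only the count $1$ gives residue~$1$. So the argument works for every prime $q\ge3$, and the congruence is equivalent to $\operatorname{ONE}_3$ holding for every clause.

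\emph{Converse.} Suppose $I=\{(i,b_i)\}$ for a satisfying exact-1-in-3 assignment. Then the variable block sums to $\one_m$, and each clause count equals~$1$, so \eqref{eq:vector-subset-sum} holds. The cardinality statement $|I|=m$ was already obtained in the variable-block step.
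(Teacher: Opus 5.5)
Your proof is correct and follows the paper's argument essentially step for step. The variable coordinates equal $|I\cap\{(i,0),(i,1)\}|\in\{0,1,2\}$, which forces exactly one choice per variable and hence $|I|=m$; the clause coordinates count the true variables of a clause, a number in $\{0,1,2,3\}$ of which only $1$ is congruent to $1$ modulo a prime $q\ge3$, and your explicit check of the case $q=3$ and of the converse direction just spells out what the paper states as ``this proves both directions.''
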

	
	\begin{proof}
		Consider first the coordinate associated with a variable~$z_i$.  The only vectors that can contribute to this coordinate are $v_{i,0}$ and $v_{i,1}$, and each contributes~$1$.  Hence the coordinate equals
		\begin{equation}
		\bigl|I\cap\{(i,0),(i,1)\}\bigr|
		\in\{0,1,2\}.
		\end{equation}
		Because $\operatorname{char}(\F_q)=q\ge3$, the elements $0,1,2$ are distinct in~$\F_q$.  Equality with the target coordinate~$1$ is therefore equivalent to selecting exactly one of $(i,0)$ and $(i,1)$.  Doing this for every~$i$ defines a Boolean assignment $b_i$ and also forces $|I|=m$.
		
		For a clause~$C_r$, the corresponding coordinate of the sum is
		\begin{equation}
		\sum_{i=1}^{m}\chi_r(i,b_i).
		\label{eq:clause-coordinate-count}
		\end{equation}
		Since $C_r$ contains three distinct variables, this integer is exactly the number of true variables in the clause and belongs to $\{0,1,2,3\}$.  For every prime $q\ge3$, the only integer in $\{0,1,2,3\}$ congruent to~$1$ modulo~$q$ is~$1$. In particular, when $q=3$, a clause with three true variables contributes~$0$, not the target~$1$. Thus the clause coordinate equals the target precisely when exactly one of its three variables is true.  This proves both directions.
	\end{proof}
	
	\paragraph{How the encoding is used later.}
	The final MSS instance lives in an extension field~$\Kfield$ equipped with an $\F_q$-linear quotient map
	\begin{equation}
	\pi:\Kfield\longrightarrow\F_q^t.
	\end{equation}
	For each Boolean choice we take $\ell_{i,b}=\sigma(v_{i,b})$, where $\sigma$ is an injective linear section of~$\pi$.  Hence the Boolean-choice elements are pairwise distinct and satisfy
	$\pi(\ell_{i,b})=v_{i,b}$.  All auxiliary moment-completion elements lie in $\ker\pi$.  Consequently, applying~$\pi$ to the \emph{first} moment equation of any MSS solution recovers exactly \eqref{eq:vector-subset-sum}.  The proposition then supplies the satisfying assignment.  The higher moment equations are used only for completeness: the auxiliary pool fills the residual moments left by a satisfying assignment.
	
	\section{Earlier Reed--Solomon BDD hardness and the PTE bottleneck}
	\label{sec:history}
	
	Gandikota, Ghazi, and Grigorescu proved the first deterministic Reed--Solomon BDD NP-hardness result with a growing gap below the covering radius by reducing through moments subset sum~\cite{GandikotaGhaziGrigorescu2018}.  Their deterministic construction produces an MSS instance of size roughly
	\begin{equation}
	|A|=m\bigl(2^{d+1}-2\bigr),
	\label{eq:ggg-size}
	\end{equation}
	where~$m$ denotes the number of Boolean variables in that construction.  The exponential factor comes from assignment-specific inhomogeneous PTE gadgets.  For two literal values $a$ and~$b$, their auxiliary sets $X$ and~$Y$ must satisfy
	\begin{align}
	\sum_{x\in X}x
	&=\sum_{y\in Y}y, \\
	a^j+\sum_{x\in X}x^j
	&=b^j+\sum_{y\in Y}y^j,
	&&2\le j\le d.
	\label{eq:ggg-inhomogeneous-pte}
	\end{align}
	Their general explicit construction has size $2^{\Theta(d)}$.
	
	There is a second difficulty.  A purported MSS witness may select arbitrary pieces of many auxiliary gadgets rather than one complete bundle per variable.  The earlier proof controls this by arranging enormous separations of magnitude: every auxiliary sub-subset has either a tiny or a huge first-moment sum.  This is the bimodality condition.
	
	The same paper proves, nonconstructively, that signed inhomogeneous PTE systems over sufficiently large finite fields have solutions of size $O(d)$ when
	$d<|\F|^{1/2-\delta}$.  That existence theorem does not by itself improve the reduction: the gadgets were still required to be explicit, assignment-labeled, and sound against unintended sub-subsets.
	
	Our construction separates these roles.
	\begin{itemize}
		\item The arithmetic input is only \emph{existence} of a one-sided completion for every target.
		\item The reduction constructs one universal pool rather than one bundle for every Boolean choice.
		\item Soundness is enforced by a quotient map, not by arithmetic restrictions on auxiliary sub-subsets.
	\end{itemize}
	This is where the point-count estimates in~\cite{Wan2026} become useful in a complexity reduction even though they do not directly provide an algorithm for finding the completing subset.
	
	\section{Uniform positive moment completion}
	\label{sec:completion}
	
	This section gives two versions of the arithmetic completion theorem.  The sharper version is a direct consequence of the higher-dimensional point-count argument in~\cite{Wan2026}; we state it without reproducing the Deligne--Hooley--Katz machinery.  We then prove a slightly weaker version using only the classical one-variable Weil bound for additive character sums~\cite{Schmidt1976}.  The weaker version is the sole completion theorem used in the NP-hardness reduction.
	
	\subsection{Notation and the higher-dimensional estimate}
	
	Let $q$ be prime, let $1\le d<h<q$, and fix
	$\boldsymbol{c}=(c_1,\ldots,c_d)\in\F_q^d$.  Define
	\begin{equation}
	X_{\boldsymbol{c},h}
	=
	\left\{
	(x_1,\ldots,x_h)\in\F_q^h:
	\sum_{i=1}^{h}x_i^j=c_j
	\text{ for }1\le j\le d
	\right\}.
	\label{eq:X-c-h}
	\end{equation}
	For $r\ne s$, put
	\begin{equation}
	Y_{r,s}
	=X_{\boldsymbol{c},h}\cap\{x_r=x_s\},
	\label{eq:Y-r-s}
	\end{equation}
	and for $1\le r\le h$ put
	\begin{equation}
	Z_r
	=X_{\boldsymbol{c},h}\cap\{x_r=0\}.
	\label{eq:Z-r}
	\end{equation}
	
	\begin{theorem}[Uniform higher-dimensional point-count estimates]
		\label{thm:uniform-higher-dimensional}
		Suppose $1\le d\le h-3$ and $h<q$.  Uniformly for every
		$\boldsymbol{c}\in\F_q^d$,
		\begin{equation}
		\left|
		|X_{\boldsymbol{c},h}|-q^{h-d}
		\right|
		\le
		\frac12(2d+2)^h q^{(h-d+1)/2},
		\label{eq:uniform-total-count}
		\end{equation}
		and, for every $r\ne s$,
		\begin{equation}
		\left|
		|Y_{r,s}|-q^{h-d-1}
		\right|
		\le
		\frac12(2d+2)^{h-1}q^{(h-d+1)/2}.
		\label{eq:uniform-collision-count}
		\end{equation}
	\end{theorem}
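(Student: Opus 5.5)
Write $\psi(z)=e^{2\pi i\,z/q}$ for the standard additive character of $\F_q$; for $\boldsymbol u\in\F_q^d$ put $P_{\boldsymbol u}(x)=\sum_{j=1}^d u_jx^j$ and $S(\boldsymbol u)=\sum_{x\in\F_q}\psi(P_{\boldsymbol u}(x))$. The plan is to count points by additive-character orthogonality and then estimate the resulting exponential sums. Orthogonality gives
\begin{equation*}
|X_{\boldsymbol c,h}|=q^{-d}\sum_{\boldsymbol u\in\F_q^d}\psi\Bigl(-\sum_j u_jc_j\Bigr)S(\boldsymbol u)^h .
\end{equation*}
The term $\boldsymbol u=0$ contributes exactly $q^{h-d}$. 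For $\boldsymbol u\ne0$, $P_{\boldsymbol u}$ is a nonconstant polynomial of degree at most $d<q$, so the one-variable Weil bound gives $|S(\boldsymbol u)|\le(d-1)\sqrt q$. If we used this bound naively on all $h$ factors, the error would be $q^{h/2}d^h$. That is too weak: the target exponent is $(h-d+1)/2$, not $h/2$. The main work is to recover the extra $q^{-(d-1)/2}$ uniformly in $\boldsymbol c$.

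To get it, I would bound the error by $q^{-d}\max_{\boldsymbol u\ne0}|S(\boldsymbol u)|^{h-2}\sum_{\boldsymbol u}|S(\boldsymbol u)|^2$. By Parseval, $\sum_{\boldsymbol u}|S(\boldsymbol u)|^2=q^d\cdot\#\{(x,y):x^j=y^j,\ 1\le j\le d\}=q^{d+1}$, since $j=1$ already forces $x=y$. This produces $q\cdot((d-1)\sqrt q)^{h-2}$, which is still only $q^{h/2}$.

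The better route is a higher even moment. Let $N_{2d}$ be the number of solutions of $\sum_{i\le d}x_i^j=\sum_{i\le d}y_i^j$ for $1\le j\le d$. Then
\begin{equation*}
\sum_{\boldsymbol u}|S(\boldsymbol u)|^{2d}=q^d N_{2d}.
\end{equation*}
Since $\operatorname{char}\F_q=q>d$, Newton identities convert equal power sums $1,\dots,d$ of two $d$-tuples into equal elementary symmetric functions. Equal elementary symmetric functions mean equal monic polynomials of degree $d$, so the multisets coincide and $N_{2d}\le d!\,q^d$. This is the moment identity of order $2d$. Hölder-type splitting then gives
\begin{equation*}
\sum_{\boldsymbol u\ne0}|S(\boldsymbol u)|^h\le\max_{\boldsymbol u\ne0}|S(\boldsymbol u)|^{h-2d}\cdot d!\,q^{2d},
\end{equation*}
which is valid once $h\ge2d$. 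Dividing by $q^d$ gives an error of at most $d!\,(d\sqrt q)^{h-2d}q^{d}=d!\,d^{h-2d}q^{h/2}$. This is still the same exponent, and I expect this to be the main obstacle. The elementary method only yields savings of the type needed for the weaker completion theorem with $C_\rho>2+2/\rho$.

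For the exponent $(h-d+1)/2$ as stated, I would instead invoke the paper's stated source \cite{Wan2026}. There, $X_{\boldsymbol c,h}$ is a complete intersection; for $h<q$ and $q>d$ its singular locus is controlled via the Jacobian Vandermonde structure, whose rank drops only when the coordinates take fewer than $d$ distinct values. Deligne's theorem together with the Hooley--Katz bound then gives $|X-q^{h-d}|\le B\,q^{(h-d+1)/2}$, with the sum of Betti numbers $B$ bounded by the Adolphson--Sperber/Katz estimate $\frac12(2d+2)^h$ for $d$ equations of degree at most $d$ in $h$ variables. The condition $h\ge d+3$ ensures the singular locus has codimension large enough for this bound.

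For $Y_{r,s}$, substituting $x_r=x_s$ turns the system into one in $h-1$ variables in which one variable carries weight $2$. The same character-sum and Deligne argument applies with $h$ replaced by $h-1$: $2$ is invertible in $\F_q$, and the weighted sum $\sum\psi(P_{\boldsymbol u}(2x))$ still satisfies the Weil bound. This gives main term $q^{h-1-d}$ and the same error exponent, with the Betti constant $\frac12(2d+2)^{h-1}$.
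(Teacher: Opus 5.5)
Your final route is the paper's: the estimate is not proved in the paper but imported from the Deligne--Hooley--Katz argument of \cite[Propositions~5.2 and~6.2]{Wan2026}, and your check that the Weil/$2d$-moment method stalls at $q^{h/2}$ is exactly why the paper keeps this sharper statement separate from \Cref{thm:elementary-point-count}. The one point the paper adds to the citation, which you leave implicit and should state, is uniformity in $\boldsymbol c$: the targets occur only in the constant terms, so the section at infinity $\{\sum_i x_i^j=0,\ 1\le j\le d\}$ (smooth by your Vandermonde observation together with $h<q$), the degree data, and the Betti-number bound (which the paper takes from \cite{WanZhang2026}, not from a general Adolphson--Sperber/Katz estimate) are the same for every $\boldsymbol c$.
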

	
	With $k=d+1$, these are the estimates proved in
	\cite[Propositions~5.2 and~6.2]{Wan2026}.  The same proof applies to an arbitrary right-hand side $\boldsymbol{c}$: the target values occur only in lower-order affine terms, while the projective section at infinity, the degree data, and the Betti-number bounds are unchanged.  Thus the constants in that proof are target-uniform.  We do not repeat the higher-dimensional cohomological argument here; it uses the Deligne--Hooley--Katz estimate and the Betti-number bound in~\cite{WanZhang2026}.
	
	\subsection{An elementary one-variable point-count estimate}
	
	The next result is weaker than \Cref{thm:uniform-higher-dimensional}, but its proof uses only one-variable character sums.
	
	\begin{theorem}[Elementary point-count bounds]
		\label{thm:elementary-point-count}
		Let $q$ be prime, let $2\le d<q$, and let $h\ge2d+2$.  Then, uniformly for every $\boldsymbol{c}\in\F_q^d$,
		\begin{align}
		\left|
		|X_{\boldsymbol{c},h}|-q^{h-d}
		\right|
		&\le
		d!(d-1)^{h-2d}q^{h/2},
		\label{eq:elementary-total-count}\\
		\left|
		|Z_r|-q^{h-d-1}
		\right|
		&\le
		d!(d-1)^{h-2d-1}q^{(h-1)/2},
		\label{eq:elementary-zero-count}\\
		\left|
		|Y_{r,s}|-q^{h-d-1}
		\right|
		&\le
		d!(d-1)^{h-2d-1}q^{(h-1)/2}
		\label{eq:elementary-collision-count}
		\end{align}
		for every $r$ and every $r\ne s$.
	\end{theorem}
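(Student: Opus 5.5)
The plan is to write each count as an average of additive characters and to bound every nonzero frequency with a single one-variable Weil bound. The total saving then comes from an exact $2d$-th moment, which Newton's identities make small.

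\emph{Character expansion and Weil bound.} Let $\psi(x)=e^{2\pi i x/q}$ be the standard additive character of the prime field $\F_q$. For $\boldsymbol u=(u_1,\ldots,u_d)\in\F_q^d$ put
\begin{equation*}
S(\boldsymbol u)=\sum_{x\in\F_q}\psi\bigl(u_1x+u_2x^2+\cdots+u_dx^d\bigr).
\end{equation*}
Orthogonality gives
\begin{equation*}
|X_{\boldsymbol c,h}|=q^{-d}\sum_{\boldsymbol u\in\F_q^d}\psi(-\boldsymbol u\cdot\boldsymbol c)\,S(\boldsymbol u)^h .
\end{equation*}
The term $\boldsymbol u=0$ contributes exactly $q^{h-d}$.

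For $\boldsymbol u\neq0$, let $e$ be the largest index with $u_e\neq0$. Then $1\le e\le d<q$, so $p\nmid e$. If $e=1$ the sum is $S(\boldsymbol u)=0$. If $e\ge2$, the classical Weil bound~\cite{Schmidt1976} gives $|S(\boldsymbol u)|\le(e-1)\sqrt q\le(d-1)\sqrt q$. In either case $|S(\boldsymbol u)|\le(d-1)\sqrt q$ for every $\boldsymbol u\ne0$.

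\emph{The moment identity of order $2d$.} Since $h\ge2d$, I factor out $h-2d$ copies of the Weil bound:
\begin{equation*}
\Bigl||X_{\boldsymbol c,h}|-q^{h-d}\Bigr|\le q^{-d}\bigl((d-1)\sqrt q\bigr)^{h-2d}\sum_{\boldsymbol u}|S(\boldsymbol u)|^{2d}.
\end{equation*}
Expand $|S(\boldsymbol u)|^{2d}=S(\boldsymbol u)^d\,\overline{S(\boldsymbol u)}^{\,d}$ and sum over all $\boldsymbol u$. By orthogonality,
\begin{equation*}
\sum_{\boldsymbol u}|S(\boldsymbol u)|^{2d}=q^dN,
\end{equation*}
where $N$ is the number of pairs $(\boldsymbol x,\boldsymbol y)\in\F_q^d\times\F_q^d$ with $\sum_i x_i^j=\sum_i y_i^j$ for $1\le j\le d$.

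Because $\operatorname{char}\F_q=q>d$, Newton's identities (the recursion in \Cref{lem:newton-conversion}) show that equal first $d$ power sums force equal elementary symmetric functions $E_1,\ldots,E_d$. Hence the polynomials $\prod_i(T-x_i)$ and $\prod_i(T-y_i)$ coincide, so $\boldsymbol y$ is a permutation of $\boldsymbol x$. This gives $N\le d!\,q^d$. Substituting yields
\begin{equation*}
\Bigl||X_{\boldsymbol c,h}|-q^{h-d}\Bigr|\le d!\,(d-1)^{h-2d}\,q^{h/2},
\end{equation*}
which is \eqref{eq:elementary-total-count}.

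\emph{The zero and collision counts.} For $Z_r$, setting $x_r=0$ leaves exactly the system $X_{\boldsymbol c,h-1}$ in the remaining $h-1$ variables. Since $h-1\ge2d$, the same argument with $h-1$ in place of $h$ gives \eqref{eq:elementary-zero-count}.

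For $Y_{r,s}$, write the common value $x_r=x_s=y$. The expansion becomes
\begin{equation*}
|Y_{r,s}|=q^{-d}\sum_{\boldsymbol u}\psi(-\boldsymbol u\cdot\boldsymbol c)\,S(\boldsymbol u)^{h-2}\,S(2\boldsymbol u).
\end{equation*}
The main term is $q^{h-1-d}$. Since $q\ge3$, the element $2$ is invertible, so $2\boldsymbol u\ne0$ whenever $\boldsymbol u\ne0$, and the Weil bound also applies to $S(2\boldsymbol u)$. Using $h-2\ge2d$, I bound $|S(2\boldsymbol u)|$ and $h-2-2d$ of the factors $|S(\boldsymbol u)|$ by $(d-1)\sqrt q$. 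The remaining $|S(\boldsymbol u)|^{2d}$ is summed by the moment identity. This gives
\begin{equation*}
q^{-d}\bigl((d-1)\sqrt q\bigr)^{h-1-2d}\,d!\,q^{2d}=d!\,(d-1)^{h-2d-1}\,q^{(h-1)/2},
\end{equation*}
which is \eqref{eq:elementary-collision-count}.

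\emph{Main obstacle.} The only substantive step is the moment bound $N\le d!\,q^d$, together with the observation that the hypothesis $q>d$ serves twice. It makes Newton's identities invertible, and it guarantees that the leading degree $e$ of every phase polynomial is prime to $p$, so the Weil bound applies. Everything else is bookkeeping: the exponent accounting must use $h\ge2d+2$ so that at least $2d$ full factors remain in each of the three cases.
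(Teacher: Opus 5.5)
Your proposal is correct and follows the paper's proof step for step. Both arguments use the character expansion, the one-variable Weil bound $(d-1)\sqrt q$ at every nonzero frequency (with linear phases giving zero), the exact $2d$-th moment identity with $T_d\le d!\,q^d$ via Newton's identities, deletion of the zero coordinate for $Z_r$, and the twisted sum $S(2\boldsymbol u)$ (the paper's $W_2$) for $Y_{r,s}$ with $h-2\ge 2d$; the only cosmetic slip is that the $p$ in your ``$p\nmid e$'' should be the characteristic $q$.
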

	
	\begin{proof}
		Fix a nontrivial additive character
		$\psi:\F_q\to\mathbb C^\times$.  For
		$\boldsymbol{\lambda}=(\lambda_1,\ldots,\lambda_d)\in\F_q^d$, define
		\begin{equation}
		F_{\boldsymbol{\lambda}}(T)
		=\sum_{j=1}^{d}\lambda_jT^j,
		\qquad
		W(\boldsymbol{\lambda})
		=\sum_{x\in\F_q}\psi\!\left(F_{\boldsymbol{\lambda}}(x)\right).
		\label{eq:weil-W-definition}
		\end{equation}
		If $\boldsymbol{\lambda}\ne0$, then
		$F_{\boldsymbol{\lambda}}$ is nonconstant of degree at most~$d<q$.  The classical Weil bound gives
		\begin{equation}
		|W(\boldsymbol{\lambda})|
		\le(d-1)\sqrt q.
		\label{eq:one-variable-weil}
		\end{equation}
		For a linear polynomial the sum is zero, so the same bound remains valid.
		
		Additive-character orthogonality gives
		\begin{equation}
		|X_{\boldsymbol{c},h}|
		=q^{-d}
		\sum_{\boldsymbol{\lambda}\in\F_q^d}
		\psi(-\boldsymbol{\lambda}\mathbin{\cdot}\boldsymbol{c})
		W(\boldsymbol{\lambda})^h.
		\label{eq:fourier-fiber-count}
		\end{equation}
		The zero frequency contributes $q^{h-d}$.  To control all nonzero frequencies together, expand the moment of order~$2d$:
		\begin{equation}
		\sum_{\boldsymbol{\lambda}\in\F_q^d}
		|W(\boldsymbol{\lambda})|^{2d}
		=q^d T_d,
		\label{eq:2d-moment-identity}
		\end{equation}
		where $T_d$ is the number of pairs of ordered $d$-tuples
		$\boldsymbol{x},\boldsymbol{y}\in\F_q^d$ satisfying
		\begin{equation}
		\sum_{i=1}^{d}x_i^j
		=
		\sum_{i=1}^{d}y_i^j,
		\qquad 1\le j\le d.
		\label{eq:equal-d-moments}
		\end{equation}
		Indeed, \eqref{eq:2d-moment-identity} follows by expanding
		$W(\boldsymbol{\lambda})^d\overline{W(\boldsymbol{\lambda})}^d$ and summing first over $\boldsymbol{\lambda}$.
		
		Because $q>d$, Newton identities recover the elementary symmetric functions of a $d$-tuple from its first $d$ power sums.  Hence the two tuples in \eqref{eq:equal-d-moments} have the same multiset of coordinates.  For each fixed $\boldsymbol{x}$, there are at most $d!$ possible ordered tuples $\boldsymbol{y}$.  Therefore
		\begin{equation}
		T_d\le d!q^d,
		\qquad
		\sum_{\boldsymbol{\lambda}}
		|W(\boldsymbol{\lambda})|^{2d}
		\le d!q^{2d}.
		\label{eq:2d-moment-bound}
		\end{equation}
		Using \eqref{eq:one-variable-weil} for the remaining $h-2d$ powers gives
		\begin{align}
		\sum_{\boldsymbol{\lambda}\ne0}
		|W(\boldsymbol{\lambda})|^h
		&\le
		\bigl((d-1)\sqrt q\bigr)^{h-2d}
		\sum_{\boldsymbol{\lambda}}
		|W(\boldsymbol{\lambda})|^{2d} \\
		&\le
		d!(d-1)^{h-2d}q^{h/2+d}.
		\label{eq:high-moment-W-bound}
		\end{align}
		Combining this with \eqref{eq:fourier-fiber-count} proves
		\eqref{eq:elementary-total-count}.
		
		Deleting a coordinate constrained to be zero identifies $Z_r$ with the same moment fiber in $h-1$ variables.  Applying the preceding argument with $h-1$ proves
		\eqref{eq:elementary-zero-count}.
		
		For a collision, set
		\begin{equation}
		W_2(\boldsymbol{\lambda})
		=\sum_{x\in\F_q}
		\psi\!\left(2F_{\boldsymbol{\lambda}}(x)\right).
		\end{equation}
		Since $q$ is odd, the polynomial $2F_{\boldsymbol{\lambda}}$ has the same degree as $F_{\boldsymbol{\lambda}}$, and
		\begin{equation}
		|W_2(\boldsymbol{\lambda})|
		\le(d-1)\sqrt q
		\qquad
		(\boldsymbol{\lambda}\ne0).
		\label{eq:W2-weil}
		\end{equation}
		After imposing $x_r=x_s$, character orthogonality gives
		\begin{equation}
		|Y_{r,s}|
		=q^{-d}
		\sum_{\boldsymbol{\lambda}\in\F_q^d}
		\psi(-\boldsymbol{\lambda}\mathbin{\cdot}\boldsymbol{c})
		W_2(\boldsymbol{\lambda})
		W(\boldsymbol{\lambda})^{h-2}.
		\label{eq:fourier-collision-count}
		\end{equation}
		The zero frequency contributes $q^{h-d-1}$.  For the remaining frequencies, use \eqref{eq:W2-weil}, the pointwise bound for $h-2-2d$ further powers of $W$, and \eqref{eq:2d-moment-bound}.  This gives
		\begin{align}
		&q^{-d}
		\sum_{\boldsymbol{\lambda}\ne0}
		|W_2(\boldsymbol{\lambda})|
		|W(\boldsymbol{\lambda})|^{h-2} \\
		&\qquad\le
		d!(d-1)^{h-2d-1}q^{(h-1)/2},
		\end{align}
		which proves \eqref{eq:elementary-collision-count}.
	\end{proof}
	
	\subsection{Strong and elementary completion theorems}
	
	\begin{theorem}[Strong uniform positive moment completion]
		\label{thm:strong-uniform-completion}
		Fix $\rho>0$ and choose a constant
		\begin{equation}
		C>1+\frac{2}{\rho}.
		\label{eq:strong-C-condition}
		\end{equation}
		For all sufficiently large~$d$, let $q$ be a prime satisfying
		\begin{equation}
		q\ge d^{2+\rho},
		\end{equation}
		and put $h=\lceil Cd\rceil$.  Then every
		$\boldsymbol{c}=(c_1,\ldots,c_d)\in\F_q^d$ has an $h$-element subset
		$S\subseteq\F_q^\times$ such that
		\begin{equation}
		\sum_{a\in S}a^j=c_j,
		\qquad 1\le j\le d.
		\label{eq:strong-completion-conclusion}
		\end{equation}
	\end{theorem}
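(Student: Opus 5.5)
Take $h=\lceil Cd\rceil$ and fix a target $\boldsymbol c$. The plan is to show that $X_{\boldsymbol c,h}$ contains a point whose coordinates are pairwise distinct and all nonzero. Such a point gives an $h$-element subset $S\subseteq\F_q^\times$ satisfying \eqref{eq:strong-completion-conclusion}. To find it, we remove from $X_{\boldsymbol c,h}$ the union of the collision loci $Y_{r,s}$ for $r\ne s$ and the zero loci $Z_r$, so it suffices to prove
\begin{equation*}
|X_{\boldsymbol c,h}|>\sum_{r<s}|Y_{r,s}|+\sum_{r}|Z_r|.
\end{equation*}
Since $d\le h-3$ holds for large $d$ when $C>1$, \Cref{thm:uniform-higher-dimensional} applies. \Cref{thm:uniform-higher-dimensional} does not bound $Z_r$, but deleting the zero coordinate identifies $Z_r$ with the fiber $X_{\boldsymbol c,h-1}$ in $h-1$ variables. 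Applying \eqref{eq:uniform-total-count} with $h-1$ in place of $h$ (valid once $d\le h-4$) gives $|Z_r|\le q^{h-d-1}+\tfrac12(2d+2)^{h-1}q^{(h-d)/2}$.

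Next compare the main terms. The lower bound is $|X|\ge q^{h-d}-\tfrac12(2d+2)^hq^{(h-d+1)/2}$. The main terms of the subtracted sets total $\bigl(\binom h2+h\bigr)q^{h-d-1}\le h^2q^{h-d-1}$. Since $q\ge d^{2+\rho}$ and $h=O(d)$, we have $h^2/q=O(d^{-\rho})\to0$, so these main terms are at most $o(1)\cdot q^{h-d}$. The error terms total at most $(h^2+1)(2d+2)^hq^{(h-d+1)/2}$. It therefore suffices to have $q^{(h-d-1)/2}\gg h^2(2d+2)^h$, that is,
\begin{equation*}
\tfrac{h-d-1}{2}\log q> h\log(2d+2)+O(\log d).
\end{equation*}

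The main obstacle is checking this exponent comparison with the exact threshold $C>1+2/\rho$. Using $\log q\ge(2+\rho)\log d$ and $h\approx Cd$, the left side is about $\tfrac{(C-1)(2+\rho)}{2}\,d\log d$ and the right side is about $C\,d\log d+O(d)$. The needed inequality $(C-1)(2+\rho)>2C$ is equivalent to $C\rho>2+\rho$, i.e.\ $C>1+2/\rho$. Because the condition is strict, there is a margin of $\epsilon\, d\log d$ for some $\epsilon=\epsilon(C,\rho)>0$. This margin absorbs the lower-order terms: the $\log2$ in $\log(2d+2)$ contributes $O(d)$, the ceiling in $h$ and the $-1$ in $h-d-1$ contribute $O(\log q)$, the factors $h^2$ and $\tfrac12$ contribute $O(\log d)$, and the $Z_r$ error term has the smaller power $q^{(h-d)/2}$. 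These are all $o(d\log d)$, except that $O(\log q)$ is only $o(d\log d)$ when $q$ is not too large. For large $q$ the inequality only improves, since the left side grows linearly in $\log q$ with positive coefficient while the right side does not depend on $q$. Hence, for all sufficiently large $d$, the count of good points is positive uniformly in $\boldsymbol c$ and in the prime $q\ge d^{2+\rho}$, which proves the theorem.
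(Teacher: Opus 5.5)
Your proposal is correct and follows the paper's route: the paper obtains this theorem from the same union bound over the collision loci $Y_{r,s}$ and the zero sections $Z_r$, identifying the latter with the fiber in $h-1$ variables and applying the target-uniform estimates of \Cref{thm:uniform-higher-dimensional}. Your exponent comparison reduces to $(C-1)(2+\rho)>2C$, i.e.\ $C>1+2/\rho$, and handles large $q$ by monotonicity in $\log q$; this correctly supplies the computation the paper omits.
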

	
	This sharper theorem follows directly by combining the target-uniform estimates in \Cref{thm:uniform-higher-dimensional} with the distinct-coordinate sieve in~\cite{Wan2026}; equivalently, one repeats the union bound in the proof of \cite[Theorem~6.3]{Wan2026}, also applying the same point-count estimate to the zero-coordinate sections.  We omit the repetition because the NP-hardness proof below uses only the elementary version that follows.
	
	\begin{theorem}[Elementary uniform positive moment completion]
		\label{thm:uniform-completion}
		Fix $\rho>0$ and choose a constant
		\begin{equation}
		C>2+\frac{2}{\rho}.
		\label{eq:C-condition}
		\end{equation}
		For all sufficiently large~$d$, let $q$ be a prime satisfying
		\begin{equation}
		q\ge d^{2+\rho},
		\label{eq:q-completion-size}
		\end{equation}
		and put $h=\lceil Cd\rceil$.  Then every
		$\boldsymbol{c}=(c_1,\ldots,c_d)\in\F_q^d$ has an $h$-element subset
		$S\subseteq\F_q^\times$ such that
		\begin{equation}
		\sum_{a\in S}a^j=c_j,
		\qquad 1\le j\le d.
		\label{eq:completion-conclusion}
		\end{equation}
	\end{theorem}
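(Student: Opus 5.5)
The plan is a sieve on the ordered fiber $X_{\boldsymbol c,h}$ using \Cref{thm:elementary-point-count}. An $h$-element subset $S\subseteq\F_q^\times$ with the prescribed power sums exists as soon as some point $(x_1,\ldots,x_h)\in X_{\boldsymbol c,h}$ has pairwise distinct and nonzero coordinates. So it suffices to show
\begin{equation}
|X_{\boldsymbol c,h}|-\sum_{r<s}|Y_{r,s}|-\sum_{r}|Z_r|>0 .
\end{equation}
First check the hypotheses: $h=\lceil Cd\rceil\ge 2d+2$ holds for large $d$ since $C>2$, and $q\ge d^{2+\rho}>d$.

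Substituting the bounds of \Cref{thm:elementary-point-count}, the main terms give
\begin{equation}
q^{h-d}-\Bigl(\tbinom h2+h\Bigr)q^{h-d-1}\ge q^{h-d}\Bigl(1-\tfrac{h^2}{q}\Bigr).
\end{equation}
Since $h^2=O(d^2)$ and $q\ge d^{2+\rho}$, this is at least $\tfrac12 q^{h-d}$ for large $d$. The error terms total at most
\begin{equation}
d!(d-1)^{h-2d}q^{h/2}+h^2\,d!(d-1)^{h-2d-1}q^{(h-1)/2}\le 2h^2\,d!\,d^{h-2d}q^{h/2}.
\end{equation}
It therefore suffices to prove
\begin{equation}
4h^2\,d!\,d^{h-2d}<q^{h/2-d}.
\end{equation}

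This inequality is the main obstacle, and it is exactly where $C>2+2/\rho$ enters. Taking logarithms and using $d!\le d^d$, $h^2\le d^{3}$ (for large $d$) and $q\ge d^{2+\rho}$, it suffices that
\begin{equation}
(h-d+O(1))\log d<(2+\rho)\Bigl(\tfrac h2-d\Bigr)\log d,
\end{equation}
that is,
\begin{equation}
h\Bigl(\tfrac{\rho}{2}\Bigr)>(1+\rho)d+O(1).
\end{equation}
With $h\ge Cd$ this reads $C>2+2/\rho$ up to an $O(1/d)$ slack. The strict inequality on $C$ absorbs both the $O(1)$ term from $h^2$ and the constants, once $d$ is large.

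I would carry out this exponent bookkeeping carefully, keeping $h=\lceil Cd\rceil$ and the factor $(d-1)^{-1}q^{-1/2}$ in the $Y,Z$ error terms, to confirm the margin is uniform in $\boldsymbol c$ and in $q\ge d^{2+\rho}$. Uniformity in $q$ is automatic, because larger $q$ only improves both comparisons: the right-hand side gains a factor $q^{h/2-d}$ with $h/2-d>0$.

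Finally, a surviving point has distinct nonzero coordinates. Its coordinate set $S$ therefore has exactly $h$ elements in $\F_q^\times$ and satisfies \eqref{eq:completion-conclusion}.
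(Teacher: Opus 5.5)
Your proposal is correct and follows essentially the same route as the paper: a union-bound sieve on $X_{\boldsymbol c,h}$ that removes the collision sets $Y_{r,s}$ and the zero sections $Z_r$ via \Cref{thm:elementary-point-count}, followed by the exponent comparison using $d!\le d^d$ and $q\ge d^{2+\rho}$, which yields exactly the condition $C>2+2/\rho$. The only difference is bookkeeping. You absorb the $Y_{r,s}$ and $Z_r$ character-sum errors crudely into a factor $h^2$, which the linear-in-$d$ margin swallows, whereas the paper keeps $E_1=E_0/((d-1)\sqrt q)$, so that the factor $h^2$ is cancelled by $(d-1)\sqrt q$.
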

	
	\begin{proof}
		For sufficiently large~$d$, the assumptions imply
		$h\ge2d+2$ and $h<q$.  Let $X=X_{\boldsymbol{c},h}$, and let
		$X^\circ$ be the set of ordered solutions with nonzero, pairwise-distinct coordinates.  By the union bound and \Cref{thm:elementary-point-count},
		\begin{align}
		|X^\circ|
		\ge{}& q^{h-d}
		-d!(d-1)^{h-2d}q^{h/2} \\
		&-\left(h+\binom h2\right)
		\left(
		q^{h-d-1}
		+d!(d-1)^{h-2d-1}q^{(h-1)/2}
		\right).
		\label{eq:elementary-union-bound}
		\end{align}
		After division by $q^{h-d}$, set
		\begin{align}
		E_0
		&=d!(d-1)^{h-2d}q^{d-h/2},
		\label{eq:E0-definition}\\
		E_1
		&=d!(d-1)^{h-2d-1}q^{d-(h+1)/2}.
		\label{eq:E1-definition}
		\end{align}
		Then
		\begin{equation}
		\frac{|X^\circ|}{q^{h-d}}
		\ge
		1-E_0
		-\left(h+\binom h2\right)
		\left(q^{-1}+E_1\right).
		\label{eq:elementary-normalized-count}
		\end{equation}
		
		Using $d!\le d^d$ and $(d-1)^{h-2d}\le d^{h-2d}$, we obtain
		\begin{align}
		E_0
		&\le d^{h-d}q^{d-h/2} \\
		&\le d^{(1+\rho)d-\rho h/2}.
		\label{eq:E0-decay}
		\end{align}
		The second inequality uses $d-h/2<0$ and
		$q\ge d^{2+\rho}$.  Since $h\ge Cd$ and
		$C>2+2/\rho$, the exponent in \eqref{eq:E0-decay} is at most
		\begin{equation}
		-\left(
		\frac{\rho C}{2}-1-\rho
		\right)d,
		\end{equation}
		which is negative with linear magnitude.  Hence $E_0\to0$ faster than any inverse polynomial in~$d$.
		
		Moreover,
		\begin{equation}
		E_1
		=\frac{E_0}{(d-1)\sqrt q}.
		\label{eq:E1-vs-E0}
		\end{equation}
		Because $h=O_\rho(d)$ and $q\ge d^{2+\rho}$,
		\begin{equation}
		\left(h+\binom h2\right)q^{-1}
		=O_\rho(d^{-\rho}),
		\label{eq:main-sieve-error}
		\end{equation}
		and
		\begin{equation}
		\left(h+\binom h2\right)E_1
		=E_0\,O_\rho(d^{-\rho/2}).
		\label{eq:character-sieve-error}
		\end{equation}
		All error terms on the right-hand side of
		\eqref{eq:elementary-normalized-count} therefore tend to zero.  For sufficiently large~$d$, the right-hand side is positive, so $X^\circ$ contains an ordered tuple of distinct nonzero coordinates.  Forgetting the order gives the required subset~$S$.
	\end{proof}
	
	\begin{remark}[PTE interpretation]
		If two distinct $h$-subsets realize the same target, canceling their intersection produces disjoint sets with equal first $d$ moments.  Thus a moment fiber containing two distinct subsets yields a finite-field PTE configuration.  The reduction below, however, uses the stronger prescribed-target completion property rather than a homogeneous PTE pair.
	\end{remark}
	
	\section{The extension-field quotient construction and power-spanning directions}
	\label{sec:extension-algebra}
	
	Let $t$ be the dimension of the vector subset-sum instance from
	\Cref{subsec:vector-subset-sum}.  We now build the extension field in which the MSS instance will live.
	
	\subsection{Explicit field construction and a product-spanning subspace}
	\label{subsec:explicit-extension-field}
	
	Let
	\begin{equation}
	L=2t+1=4m+1.
	\label{eq:extension-degree}
	\end{equation}
	We first make the construction and representation of the extension field explicit.  In the final reduction, $q$ is a prime satisfying
	\begin{equation}
	q=\Theta\!\left(d^{2+\rho}\right),
	\qquad
	d=t^{\nu},
	\label{eq:q-polynomial-preview}
	\end{equation}
	for a fixed positive integer $\nu$ and a fixed $\rho>0$; see \Cref{sec:hardness}.  Hence
	$q=t^{O_{\nu,\rho}(1)}$.
	
	By Shoup's deterministic algorithm, one can construct a monic irreducible polynomial
	\begin{equation}
	F(Z)\in\F_q[Z],
	\qquad
	\deg F=L,
	\label{eq:irreducible-polynomial}
	\end{equation}
	in time
	\begin{equation}
	O(L^4q)
	\label{eq:shoup-running-time}
	\end{equation}
	for the parameters used here~\cite{Shoup1990}. This coarse bound suffices: since $L=2t+1$ and $q=t^{O_{\nu,\rho}(1)}$, the extension field is constructed deterministically in polynomial time in the source-instance size.
	
	Set
	\begin{equation}
	\Kfield
	=\F_q[Z]/(F(Z)),
	\qquad
	\theta=Z\bmod F(Z).
	\label{eq:extension-field}
	\end{equation}
	Then
	\begin{equation}
	1,\theta,\ldots,\theta^{2t}
	\label{eq:power-basis}
	\end{equation}
	is an $\F_q$-basis of~$\Kfield$.  We represent a field element by its coefficient vector in this basis.  Addition is coefficientwise, multiplication is polynomial multiplication followed by reduction modulo~$F$, and inversion is computed by the extended Euclidean algorithm in~$\F_q[Z]$.  These operations take time polynomial in $L\log q$, and each element has representation length
	\begin{equation}
	O(L\log q)=O(t\log d)
	\label{eq:field-element-size}
	\end{equation}
	bits for the parameters of the reduction.
	
	Define
	\begin{equation}
	H=\Span_{\F_q}\{1,\theta,\ldots,\theta^t\}.
	\label{eq:H-definition}
	\end{equation}
	Then
	\begin{equation}
	\dim_{\F_q}(\Kfield/H)=t,
	\label{eq:quotient-dimension}
	\end{equation}
	and
	\begin{equation}
	\Span_{\F_q}(H\cdot H)=\Kfield,
	\label{eq:H-product-spans}
	\end{equation}
	because the products of basis monomials span
	$1,\theta,\ldots,\theta^{2t}$.
	
	Write each field element uniquely as
	\begin{equation}
	x=\sum_{r=0}^{2t}x_r\theta^r,
	\qquad x_r\in\F_q,
	\end{equation}
	and define the quotient map explicitly by
	\begin{equation}
	\pi:\Kfield\longrightarrow\F_q^t,
	\qquad
	\pi(x)=(x_{t+1},\ldots,x_{2t}).
	\label{eq:quotient-isomorphism}
	\end{equation}
	Then $\ker\pi=H$.  The linear section
	\begin{equation}
	\sigma(u_1,\ldots,u_t)
	=\sum_{r=1}^{t}u_r\theta^{t+r}
	\end{equation}
	is explicit and injective because $\pi\circ\sigma$ is the identity on~$\F_q^t$.  For each Boolean choice $(i,b)$, define
	\begin{equation}
	\ell_{i,b}=\sigma(v_{i,b}).
	\label{eq:choice-lift}
	\end{equation}
	The vectors $v_{i,b}$ are pairwise distinct, so the Boolean-choice elements $\ell_{i,b}$ are pairwise distinct as well.  Moreover,
	\begin{equation}
	\pi(\ell_{i,b})=v_{i,b}.
	\end{equation}
	Since every $v_{i,b}$ is nonzero, all Boolean-choice elements lie outside~$H=\ker\pi$.
	
	Every auxiliary element will lie in~$H$.  Consequently, for every auxiliary subset $Y\subseteq H$,
	\begin{equation}
	\pi\!\left(\sum_{y\in Y}y\right)=0.
	\label{eq:quotient-zero}
	\end{equation}
	This quotient identity protects soundness.  Notice that it holds for \emph{every} auxiliary sub-subset; no bundle structure or bimodality is needed.
	
	\subsection{A deterministic power condenser of linear size}
	\label{subsec:deterministic-power-condenser}
	
	We next construct a projectively distinct set of directions whose powers span
	simultaneously.  The construction has linear size and is fully deterministic.
	The condition on~$q$ below is automatic in the reduction because
	$q\ge d^{2+\rho}$.
	
	\begin{lemma}[Pure powers span the extension]
		\label{lem:pure-powers-span}
		Assume $2\le j<q$.  Then
		\begin{equation}
		\Span_{\F_q}\{x^j:x\in H\}=\Kfield.
		\label{eq:pure-powers-span}
		\end{equation}
	\end{lemma}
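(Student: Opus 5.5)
Let $V_j=\Span_{\F_q}\{x^j:x\in H\}$. The plan is to show that $V_j$ contains every product $uv$ with $u,v\in H$, and then invoke \eqref{eq:H-product-spans}. The tool is polarization: the map $x\mapsto x^j$ restricted to a two-dimensional $\F_q$-subspace of $H$ yields a polynomial identity, and Vandermonde interpolation over $\F_q$ lets us extract its homogeneous components.

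First fix $u,v\in H$. For every $\lambda\in\F_q$ the element $u+\lambda v$ lies in $H$, since $H$ is an $\F_q$-subspace, so $(u+\lambda v)^j\in V_j$. Expand
\begin{equation*}
(u+\lambda v)^j=\sum_{i=0}^{j}\binom{j}{i}\lambda^i u^{j-i}v^i .
\end{equation*}
This is a polynomial in $\lambda$ of degree at most $j$ with coefficients in $\Kfield$. Because $j<q$, there are $j+1\le q$ distinct values $\lambda_0,\ldots,\lambda_j\in\F_q$, and the Vandermonde matrix $(\lambda_a^i)$ is invertible over $\F_q$. Inverting it expresses each coefficient $\binom{j}{i}u^{j-i}v^i$ as an $\F_q$-linear combination of the elements $(u+\lambda_a v)^j\in V_j$. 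Since $j<q=\operatorname{char}\F_q$, every binomial coefficient $\binom{j}{i}$ is a unit in $\F_q$. Therefore $u^{j-i}v^i\in V_j$ for all $0\le i\le j$ and all $u,v\in H$.

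Next take $i=1$, which gives $u^{j-1}v\in V_j$ for all $u,v\in H$. The goal is to pass from these elements to arbitrary products $uv$. The cleanest route is to take $u=1\in H$, which is legitimate because $H$ contains $1=\theta^0$. This gives $v\in V_j$ for every $v\in H$. For a second factor, apply the same extraction to three-term combinations $w+\lambda u+\mu v$ with $w=1$. Two-variable interpolation over $\F_q\times\F_q$ still works, because the degree in each variable is at most $j<q$. The coefficient of $\lambda\mu$ is $j(j-1)\,uv$, and $j(j-1)$ is a unit in $\F_q$ because $2\le j<q$. Hence $uv\in V_j$ for all $u,v\in H$, so $V_j\supseteq\Span_{\F_q}(H\cdot H)=\Kfield$ by \eqref{eq:H-product-spans}. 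The reverse inclusion is trivial.

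The main obstacle is making the two-variable extraction precise. Concretely, the polynomial $P(\lambda,\mu)=(1+\lambda u+\mu v)^j$ has $\Kfield$-coefficients and degree at most $j<q$ in each variable. Evaluating it on the grid $\{\lambda_a\}\times\{\mu_b\}$ and inverting the tensor product of two Vandermonde matrices recovers every coefficient as an $\F_q$-combination of the values, and each value lies in $V_j$. This step needs $j+1\le q$ distinct evaluation points and the invertibility of $j(j-1)$ in $\F_q$; both follow from $2\le j<q$.
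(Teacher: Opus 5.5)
Your proof is correct and is essentially the paper's argument: both rest on $1\in H$, the product-spanning property $\Span_{\F_q}(H\cdot H)=\Kfield$, and polarization in characteristic $q>j$. The difference is only in how the product is extracted. The paper pads $uv$ with $j-2$ factors of $1$ and applies the explicit $j!$ inclusion--exclusion polarization identity, while you recover the mixed coefficient $j(j-1)\,uv$ of $(1+\lambda u+\mu v)^j$ by Vandermonde interpolation (your preliminary two-term step is not needed for this).
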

	
	\begin{proof}
		Let
		\begin{equation}
		H^{\langle j\rangle}
		=
		\Span_{\F_q}
		\{x_1\cdots x_j:x_1,\ldots,x_j\in H\}.
		\end{equation}
		Because $1\in H$ and \eqref{eq:H-product-spans} holds,
		\begin{equation}
		\Kfield
		=\Span(H\cdot H)
		\subseteq H^{\langle j\rangle}
		\subseteq\Kfield,
		\end{equation}
		so $H^{\langle j\rangle}=\Kfield$.  Since $j!\ne0$ in~$\F_q$, polarization gives
		\begin{equation}
		j!\,x_1\cdots x_j
		=
		\sum_{S\subseteq[j]}
		(-1)^{j-|S|}
		\left(\sum_{i\in S}x_i\right)^j.
		\label{eq:full-polarization}
		\end{equation}
		Every sum on the right belongs to~$H$.  Hence every $j$-fold product lies in the span of the pure $j$th powers, proving the claim.
	\end{proof}
	
	For an $\F_q$-subspace $U\subseteq\Kfield$, define
	\begin{equation}
	U^{\langle0\rangle}=\F_q\cdot1,
	\qquad
	U^{\langle s\rangle}
	=
	\Span_{\F_q}
	\{u_1\cdots u_s:u_i\in U\}
	\quad(s\ge1).
	\label{eq:product-power-space}
	\end{equation}
	These spaces are computed recursively by
	\begin{equation}
	U^{\langle s+1\rangle}
	=
	\Span_{\F_q}
	\{uw:u\in U,\ w\in U^{\langle s\rangle}\}.
	\label{eq:product-power-recursion}
	\end{equation}
	All of them lie in the $L$-dimensional vector space~$\Kfield$.
	
	\begin{lemma}[Identity test on an affine subspace]
		\label{lem:affine-power-identity-test}
		Let $a\in H$, let $U\subseteq H$ be an $\F_q$-subspace, let
		$1\le j<q$, and let $\lambda:\Kfield\to\F_q$ be linear.  Then the function
		\begin{equation}
		y\longmapsto\lambda\bigl((a+y)^j\bigr),
		\qquad y\in U,
		\label{eq:affine-power-polynomial}
		\end{equation}
		vanishes identically on~$U$ if and only if
		\begin{equation}
		\lambda\!\left(a^{j-s}U^{\langle s\rangle}\right)=0
		\qquad
		\text{for every }0\le s\le j.
		\label{eq:affine-identity-criterion}
		\end{equation}
		The condition in \eqref{eq:affine-identity-criterion} can be tested in deterministic time polynomial in $L$, $j$, and $\log q$.
	\end{lemma}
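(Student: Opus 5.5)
Expand by the binomial theorem. For $y\in U$,
\begin{equation*}
\lambda\bigl((a+y)^j\bigr)=\sum_{s=0}^{j}\binom js\lambda\bigl(a^{j-s}y^s\bigr).
\end{equation*}
Choose an $\F_q$-basis $u_1,\ldots,u_r$ of $U$ and write $y=\sum_i y_iu_i$. Each summand $\lambda(a^{j-s}y^s)$ is a homogeneous polynomial of degree $s$ in $y_1,\ldots,y_r$ with coefficients in $\F_q$. Its coefficient on $y^{\boldsymbol e}$, with $|\boldsymbol e|=s$, is the multinomial coefficient $\binom{s}{\boldsymbol e}$ times $\lambda(a^{j-s}u^{\boldsymbol e})$.

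\emph{Sufficiency.} If \eqref{eq:affine-identity-criterion} holds, then every $y^s$ lies in $U^{\langle s\rangle}$. Hence every term $\lambda(a^{j-s}y^s)$ vanishes, and the whole function vanishes identically on $U$.

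\emph{Necessity.} The key point is that the total degree is $j<q$.
\begin{itemize}
\item A polynomial over $\F_q$ in which each variable has degree less than $q$ vanishes on all of $\F_q^r$ only if it is the zero polynomial. So vanishing on $U\cong\F_q^r$ forces every coefficient to be zero.
\item The homogeneous pieces of different degrees $s$ have disjoint monomial supports, so each piece vanishes separately.
\item The binomial coefficient $\binom js$ and the multinomial coefficients $\binom{s}{\boldsymbol e}$ are nonzero in $\F_q$, since $s\le j<q$.
\end{itemize}
Therefore $\lambda(a^{j-s}u^{\boldsymbol e})=0$ for every monomial $u^{\boldsymbol e}$ of degree $s$ in the basis. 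These monomials in $u_1,\ldots,u_r$ span $U^{\langle s\rangle}$ by multilinearity. By linearity of $\lambda$ and of multiplication by $a^{j-s}$, we get $\lambda(a^{j-s}U^{\langle s\rangle})=0$. The case $s=0$ reads $\lambda(a^j)=0$, which is consistent with $U^{\langle0\rangle}=\F_q$.

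\emph{Algorithm.} Compute bases of $U^{\langle s\rangle}$ for $s=0,\ldots,j$ using the recursion \eqref{eq:product-power-recursion}:
\begin{itemize}
\item Multiply each basis vector of $U$ by each basis vector of $U^{\langle s\rangle}$. This gives at most $L^2$ products.
\item Row-reduce over $\F_q$ to extract a basis of $U^{\langle s+1\rangle}$, whose dimension is at most $L$.
\end{itemize}
Then compute the powers $a^{j-s}$ by repeated multiplication. Multiply each by the basis of $U^{\langle s\rangle}$ and apply $\lambda$, which is given as a coefficient vector. Check that all results are zero. There are $j+1$ stages, each with $\poly(L)$ field operations costing $\poly(L\log q)$ apiece, so the total time is polynomial in $L$, $j$, $\log q$.

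The only delicate step is necessity: one must keep the degree below $q$ so that vanishing as a function implies vanishing as a polynomial. The hypothesis $j<q$ guarantees this and also makes all binomial coefficients invertible.
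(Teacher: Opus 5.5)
Your proof is correct and takes essentially the same route as the paper. You expand binomially in the coordinates of a basis of $U$, use total degree $j<q$ and the grid argument to force every coefficient to vanish, divide by the nonzero binomial coefficients, and test the condition by computing $U^{\langle s\rangle}$ recursively with Gaussian elimination. The only difference is cosmetic: you read off each monomial coefficient directly, using that $\binom{s}{\boldsymbol e}\neq 0$ and that the monomials $u^{\boldsymbol e}$ span $U^{\langle s\rangle}$, whereas the paper concludes that each homogeneous component vanishes and invokes polarization to get $\Span_{\F_q}\{y^s:y\in U\}=U^{\langle s\rangle}$.
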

	
	\begin{proof}
		Choose a basis of~$U$ and regard \eqref{eq:affine-power-polynomial} as a polynomial in the corresponding coordinates.  Its total degree is at most~$j<q$.  Expanding by homogeneous degree gives
		\begin{equation}
		\lambda\bigl((a+y)^j\bigr)
		=
		\sum_{s=0}^{j}
		\binom{j}{s}
		\lambda\bigl(a^{j-s}y^s\bigr).
		\label{eq:affine-binomial-expansion}
		\end{equation}
		A polynomial of degree less than~$q$ in every variable that vanishes on the full grid $\F_q^{\dim U}$ is the zero polynomial.  Hence the polynomial in \eqref{eq:affine-binomial-expansion} vanishes identically exactly when each homogeneous component vanishes.  The binomial coefficients are nonzero, and polarization implies
		\begin{equation}
		\Span_{\F_q}\{y^s:y\in U\}=U^{\langle s\rangle}.
		\end{equation}
		This proves the equivalence.  The recursive construction in
		\eqref{eq:product-power-recursion}, followed by multiplication by
		$a^{j-s}$ and Gaussian elimination, gives the claimed deterministic test.
	\end{proof}
	
	\begin{lemma}[Deterministic linear-size power condenser]
		\label{lem:deterministic-directions}
		Assume $d\ge2$ and
		\begin{equation}
		q>\sum_{j=1}^{d}j=\frac{d(d+1)}2.
		\label{eq:condenser-field-size}
		\end{equation}
		One can construct, in deterministic time polynomial in $L$, $d$, and
		$\log q$, a projectively distinct set
		$B\subseteq H\setminus\{0\}$ satisfying
		\begin{align}
		\Span_{\F_q}B&=H,
		\label{eq:B-first-span}\\
		\Span_{\F_q}\{\beta^j:\beta\in B\}&=\Kfield,
		&&2\le j\le d,
		\label{eq:B-higher-span}
		\end{align}
		and
		\begin{equation}
		|B|=L+1.
		\label{eq:B-linear-size}
		\end{equation}
		The set~$B$ has even cardinality.
	\end{lemma}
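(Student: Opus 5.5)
The plan is to build $B$ greedily, one element at a time, so that each new element strictly enlarges every power span that is not yet full. Each choice is made deterministically by fixing coordinates one at a time, using \Cref{lem:affine-power-identity-test} as a nonvanishing oracle. The hypothesis $q>\sum_{j=1}^d j$ is exactly what makes the product of all the constraints a polynomial of degree $<q$.

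\emph{Greedy step.} Suppose $\beta_1,\ldots,\beta_{k-1}\in H$ have been chosen. Put $V_1=\Span\{\beta_i\}$, and for $2\le j\le d$ put $V_j=\Span\{\beta_i^j\}$. Let $J$ be the set of indices $j$ with $V_j\ne\Kfield$ (for $j\ge2$), together with $j=1$ if $V_1\ne H$. For each $j\in J$, Gaussian elimination gives a nonzero linear functional $\lambda_j:\Kfield\to\F_q$ vanishing on $V_j$; for $j=1$ we also require $\lambda_1$ to be nonzero on~$H$. Consider
\[
P(y)=\prod_{j\in J}\lambda_j(y^j),\qquad y\in H,
\]
viewed as a polynomial in the coordinates of $y$ with respect to the basis $1,\theta,\ldots,\theta^t$. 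For $j\ge2$, the factor $\lambda_j(y^j)$ is not identically zero on $H$ by \Cref{lem:pure-powers-span}, since $j\le d<q$. The factor $\lambda_1(y)$ is nonzero by the choice of $\lambda_1$. Each factor has degree $j$, so $\deg P\le d(d+1)/2<q$. In particular each factor is a nonzero polynomial, $P$ is a nonzero polynomial because $\F_q[y]$ is a domain, and since its degree is below $q$, $P$ is also nonzero as a function on $H$.

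\emph{Finding a good $\beta_k$ deterministically.} We fix the coordinates $y_0,y_1,\ldots,y_t$ one at a time. At stage $r$, suppose $a=\sum_{i<r}c_i\theta^i$ is already fixed and $P(a+y)$ is a nonzero polynomial on $U_r=\Span\{\theta^r,\ldots,\theta^t\}$. Since its degree is $<q$, some value $c\in\F_q$ for $y_r$ keeps $P(a+c\theta^r+y)$ nonzero on $U_{r+1}$. To check a candidate $c$ we must decide whether this polynomial is nonzero. That holds exactly when every factor $\lambda_j((a+c\theta^r+y)^j)$ is nonzero on $U_{r+1}$, and each such factor is tested by \Cref{lem:affine-power-identity-test}. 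We try all $q$ values of $c$; since $q$ is polynomial in the reduction, and the lemma runs in time polynomial in $L,d,\log q$, this is fine. After $t+1$ stages we have $\beta_k\in H$ with $\lambda_j(\beta_k^j)\ne0$ for all $j\in J$. Hence every deficient $V_j$ gains one dimension.

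\emph{Counting and projective distinctness.} All $V_j$ start at dimension $0$. So after $L$ steps $V_j=\Kfield$ for $2\le j\le d$, and already after $t+1$ steps $V_1=H$; this gives \eqref{eq:B-first-span} and \eqref{eq:B-higher-span}. During steps $k\le L$, the space $V_2$ is still deficient, so $\beta_k^2\notin V_2$. If $\beta_k=c\beta_i$ with $i<k$, then $\beta_k^2=c^2\beta_i^2\in V_2$, a contradiction. The same argument gives $\beta_k\neq0$, so $\beta_1,\ldots,\beta_L$ are nonzero and projectively distinct. For the last element $\beta_{L+1}$ we only need a nonzero vector of $H$ not proportional to any earlier $\beta_i$. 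We enumerate the $(t+1)+\binom{t+1}{2}>L$ pairwise non-proportional vectors $\theta^r$ and $\theta^r+\theta^s$ ($r<s$). Each earlier line contains at most one of them, so one survives, and it is found by checking proportionality directly. This gives $|B|=L+1=2t+2$, which is even.

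The main obstacle is the deterministic derandomization of the greedy step. Existence of a good $\beta_k$ is easy by degree counting, but we must turn ``$P\not\equiv0$'' into an efficient coordinate-by-coordinate search. The two points to check there are that the degree budget includes the $j=1$ factor, which is why the hypothesis sums $j$ from $1$ to $d$, and that testing a product reduces to testing each factor, which is where the affine identity test is used.
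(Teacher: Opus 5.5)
Your construction follows the paper's proof: greedy simultaneous rank increase against annihilating functionals $\lambda_j$, coordinate-by-coordinate fixing of $\beta$ using \Cref{lem:affine-power-identity-test} as the nonvanishing oracle, projective distinctness from the still-deficient $j=2$ span, and a parity fix from the list $\theta^r$, $\theta^r+\theta^s$. Bundling the constraints into one product $P$ is only cosmetic.

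\textbf{Gap: the running time.} At each stage you try all $q$ values of $c$, and you justify this by saying that $q$ is polynomial in the reduction. That gives time polynomial in $q$. The lemma asserts time polynomial in $L$, $d$ and $\log q$, and its hypotheses bound $q$ only from below, so the stated claim is not proved.

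The missing step is a count of the bad scalars, which you already have the means to do.
\begin{enumerate}
\item Write $P(a+C\theta^r+y)=\sum_\mu g_\mu(C)\,y^\mu$ for $y\in U_{r+1}$.
\item By your invariant this is a nonzero polynomial. So some $g_\mu$ is a nonzero univariate polynomial of degree at most $\deg P\le d(d+1)/2<q$.
\item Suppose $P(a+c\theta^r+y)$ vanishes identically on $U_{r+1}$. Every $y$-degree is below $q$, so the grid argument gives $g_\mu(c)=0$ for every $\mu$.
\item Hence at most $d(d+1)/2$ values of $c$ are bad, and testing any fixed $1+d(d+1)/2$ elements of $\F_q$ suffices.
\end{enumerate}
This is how the paper uses the hypothesis $q>\sum_{j=1}^d j$: it gives existence of a good $c$ and also bounds the candidate list. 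That restores the $\poly(L,d,\log q)$ bound.

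\textbf{Smaller point.} Your parity count $(t+1)+\binom{t+1}{2}>L$ holds only for $t\ge2$. At $t=1$ both sides equal $3$, which is why the paper adds $e_1+2e_2$ in that case. This is harmless here because $t=2m\ge2$, but you should say so explicitly.
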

	
	\begin{proof}
		We first construct a projectively distinct set $B_0$ of exactly~$L$ directions.  Suppose that $B_r$ has already been selected, and define
		\begin{align}
		W_{1,r}
		&=\Span_{\F_q}B_r\subseteq H,\\
		W_{j,r}
		&=\Span_{\F_q}\{\beta^j:\beta\in B_r\}
		\subseteq\Kfield,
		&&2\le j\le d.
		\label{eq:current-power-spans}
		\end{align}
		For each deficient span choose an annihilating functional.  If
		$W_{1,r}\ne H$, choose a nonzero functional on $H/W_{1,r}$ and extend it to a linear functional
		$\lambda_1:\Kfield\to\F_q$.  Thus
		\begin{equation}
		\lambda_1(W_{1,r})=0,
		\qquad
		\lambda_1|_H\ne0.
		\label{eq:first-annihilator}
		\end{equation}
		If $W_{j,r}\ne\Kfield$ for $j\ge2$, choose
		$0\ne\lambda_j\in\operatorname{Hom}_{\F_q}(\Kfield,\F_q)$ with
		\begin{equation}
		\lambda_j(W_{j,r})=0.
		\label{eq:higher-annihilator}
		\end{equation}
		Call these indices active.  By \eqref{eq:first-annihilator} and
		Lemma~\ref{lem:pure-powers-span}, for every active~$j$ the polynomial
		\begin{equation}
		x\longmapsto\lambda_j(x^j),
		\qquad x\in H,
		\label{eq:active-power-polynomial}
		\end{equation}
		is not identically zero.
		
		We now find one $\beta\in H$ on which all active polynomials are nonzero.  Let
		$e_1,\ldots,e_{t+1}$ be a basis of~$H$.  Fix the coordinates of
		\begin{equation}
		\beta=c_1e_1+\cdots+c_{t+1}e_{t+1}
		\label{eq:coordinate-fixed-beta}
		\end{equation}
		one at a time.  Suppose $c_1,\ldots,c_{s-1}$ have been fixed, put
		\begin{equation}
		a=\sum_{i<s}c_ie_i,
		\qquad
		U=\Span\{e_s,\ldots,e_{t+1}\},
		\end{equation}
		and maintain that every active polynomial remains nonzero on the affine space $a+U$.  Set
		\begin{equation}
		U'=\Span\{e_{s+1},\ldots,e_{t+1}\}.
		\end{equation}
		For a formal scalar~$C$, consider
		\begin{equation}
		G_j(C,y)
		=
		\lambda_j\bigl((a+Ce_s+y)^j\bigr),
		\qquad y\in U'.
		\label{eq:coordinate-restriction}
		\end{equation}
		The induction invariant says that $G_j$ is not the zero polynomial.  Viewed as a polynomial in~$y$, at least one coefficient is therefore a nonzero univariate polynomial in~$C$ of degree at most~$j$.  If the restriction
		\begin{equation}
		y\longmapsto G_j(c,y)
		\end{equation}
		vanishes identically on~$U'$, then the grid argument in
		Lemma~\ref{lem:affine-power-identity-test} forces every coefficient in~$y$ to vanish at~$C=c$.  Hence there are at most~$j$ such values of~$c$.  Across all active indices, the total number of bad values is at most
		\begin{equation}
		\sum_{j=1}^{d}j<q.
		\label{eq:number-bad-scalars}
		\end{equation}
		Test any fixed collection of $1+\sum_{j=1}^d j$ distinct field elements.  By Lemma~\ref{lem:affine-power-identity-test}, each candidate is tested deterministically, and at least one candidate preserves every active polynomial.  Continuing through all coordinates produces a vector $\beta\in H$ satisfying
		\begin{equation}
		\lambda_j(\beta^j)\ne0
		\qquad
		\text{for every active }j.
		\label{eq:simultaneous-rank-increase}
		\end{equation}
		
		Set $B_{r+1}=B_r\cup\{\beta\}$.  Equation
		\eqref{eq:simultaneous-rank-increase} increases every deficient dimension by one.  The spaces $W_{j,r}$ for $j\ge2$ have ambient dimension~$L$, so after $L$ rounds they all equal~$\Kfield$; the first-power span reaches~$H$ no later.  Since $j=2$ remains deficient before the $L$th rank increase, the construction selects exactly~$L$ elements.
		
		The selected elements are projectively distinct.  Indeed, if a new element were a nonzero base-field multiple of a previous element, then all of its active powers would already lie in the corresponding spaces $W_{j,r}$, contradicting
		\eqref{eq:simultaneous-rank-increase}.  The selected element is also nonzero, since every active polynomial vanishes at zero.
		
		Finally, $L=2t+1$ is odd.  Let $e_1,\ldots,e_{t+1}$ be a basis of~$H$.  If $t\ge2$, the vectors
		\begin{equation}
		e_i
		\quad\text{and}\quad
		e_i+e_j,
		\qquad 1\le i<j\le t+1,
		\label{eq:parity-direction-list}
		\end{equation}
		represent
		\begin{equation}
		(t+1)+\binom{t+1}{2}
		=\frac{(t+1)(t+2)}2
		>2t+1=L
		\end{equation}
		pairwise distinct projective directions.  If $t=1$, also include
		$e_1+2e_2$, which is available because $q>3$.  Since $B_0$ represents only~$L$ projective classes, this explicit list contains a new direction.  Add it to~$B_0$.  This preserves all spans, gives an even set~$B$, and proves
		\eqref{eq:B-linear-size}.
		
		The algorithm uses at most $L$ outer rounds, at most $t+1\le L$ coordinate-fixing stages per round, and $O(d^2)$ field candidates per stage.  All remaining operations are multiplication and Gaussian elimination in the $L$-dimensional space~$\Kfield$, so the running time is polynomial in $L$, $d$, and~$\log q$.
	\end{proof}

	\section{A balanced universal completion pool}
	\label{sec:universal-pool}
	
	Let $B\subset H\setminus\{0\}$ be a projectively distinct direction set of even size~$R$, satisfying \eqref{eq:B-first-span} and \eqref{eq:B-higher-span}.  Partition it into two equal parts,
	\begin{equation}
	B=B_+\sqcup B_-,
	\qquad |B_+|=|B_-|=R/2,
	\end{equation}
	and define signs
	\begin{equation}
	\eps_\beta=
	\begin{cases}
	+1,&\beta\in B_+,\\
	-1,&\beta\in B_-.
	\end{cases}
	\end{equation}
	For each direction, put the full nonzero base-field line into the auxiliary universe:
	\begin{equation}
	D_\beta=\beta\F_q^\times,
	\qquad
	D=\bigsqcup_{\beta\in B}D_\beta.
	\label{eq:auxiliary-pool}
	\end{equation}
	The union is disjoint because the directions are projectively distinct, and
	\begin{equation}
	|D|=R(q-1).
	\label{eq:D-size}
	\end{equation}
	
	Because $1\le j\le d<q-1$,
	\begin{equation}
	\sum_{a\in\F_q^\times}a^j=0.
	\label{eq:full-line-zero}
	\end{equation}
	This permits a complement trick that keeps every completion at exactly half the size of~$D$.
	
	\begin{theorem}[Balanced universal completion]
		\label{thm:balanced-completion}
		Assume the hypotheses of \Cref{thm:uniform-completion}.  For every residual vector
		\begin{equation}
		r_1\in H,
		\qquad
		r_2,\ldots,r_d\in\Kfield,
		\label{eq:residual-domain}
		\end{equation}
		there is a subset $Y\subseteq D$ such that
		\begin{align}
		|Y|&=\frac{|D|}{2},
		\label{eq:Y-half}\\
		\sum_{y\in Y}y^j&=r_j,
		&&1\le j\le d.
		\label{eq:Y-moments}
		\end{align}
	\end{theorem}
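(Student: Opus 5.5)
The plan is to use the direction structure of $D$ to reduce the problem to independent one-line problems over $\F_q$. Each line is then handled by \Cref{thm:uniform-completion}, and the complement identity \eqref{eq:full-line-zero} balances the sizes. The key observation is that a subset of a single line $D_\beta=\beta\F_q^\times$ has the form $\beta S_\beta$ with $S_\beta\subseteq\F_q^\times$, and its moments are
\begin{equation*}
\sum_{a\in S_\beta}(\beta a)^j=\beta^j\,p_j(S_\beta),
\qquad
p_j(S_\beta)=\sum_{a\in S_\beta}a^j\in\F_q .
\end{equation*}
So if $Y=\bigsqcup_\beta \beta S_\beta$, then $\sum_{y\in Y}y^j=\sum_{\beta\in B}p_j(S_\beta)\,\beta^j$. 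This is an $\F_q$-linear combination of the $j$th powers of the directions.

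\textbf{Step 1 (linear decomposition of the residual).} For $j=1$, since $r_1\in H$ and $\Span_{\F_q}B=H$ by \eqref{eq:B-first-span}, choose scalars $c_{\beta,1}\in\F_q$ with $\sum_\beta c_{\beta,1}\beta=r_1$. For $2\le j\le d$, \eqref{eq:B-higher-span} gives scalars $c_{\beta,j}$ with $\sum_\beta c_{\beta,j}\beta^j=r_j$. The choices for different $j$ are independent. This reduces the theorem to the following: for each $\beta$, find $S_\beta\subseteq\F_q^\times$ with $p_j(S_\beta)=c_{\beta,j}$ for all $1\le j\le d$, subject to a cardinality constraint. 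The hypothesis $r_1\in H$ is used exactly here, and it is unavoidable, because every first moment of a subset of $D\subseteq H$ lies in $H$.

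\textbf{Step 2 (per-line completion with balanced sizes).} Let $h=\lceil Cd\rceil$ as in \Cref{thm:uniform-completion}.
\begin{itemize}
\item For $\beta\in B_+$, apply that theorem to the target vector $(c_{\beta,j})_j$ to get an $h$-subset $S_\beta\subseteq\F_q^\times$ with the required power sums.
\item For $\beta\in B_-$, apply it instead to the target $(-c_{\beta,j})_j$ to get an $h$-subset $S'_\beta$, and set $S_\beta=\F_q^\times\setminus S'_\beta$. Since $1\le j\le d<q-1$, identity \eqref{eq:full-line-zero} gives $p_j(S_\beta)=0-(-c_{\beta,j})=c_{\beta,j}$, and $|S_\beta|=q-1-h$. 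This requires $h\le q-1$, which holds since $h=O(d)$ and $q\ge d^{2+\rho}$.
\end{itemize}
In other words, the sign $\eps_\beta$ records whether the line uses a completion or the complement of a completion.

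\textbf{Step 3 (assembly).} Set $Y=\bigsqcup_{\beta}\beta S_\beta\subseteq D$. The union is disjoint because the directions are projectively distinct. By the observation above, $\sum_{y\in Y}y^j=\sum_\beta c_{\beta,j}\beta^j=r_j$ for all $1\le j\le d$. Since $|B_+|=|B_-|=R/2$, the size is
\begin{equation*}
|Y|=\tfrac R2\,h+\tfrac R2\,(q-1-h)=\tfrac{R(q-1)}{2}=\tfrac{|D|}{2}.
\end{equation*}

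No step is a serious obstacle once \Cref{thm:uniform-completion} is available. The only point needing care is that the per-line targets are arbitrary vectors in $\F_q^d$, possibly zero. This is exactly why the target-uniform completion theorem is needed, rather than a statement about generic targets.
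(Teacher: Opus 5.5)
Your proof is correct and follows the paper's argument. You decompose each residual along the powers of the directions, complete each line via \Cref{thm:uniform-completion}, and use complements on $B_-$ together with \eqref{eq:full-line-zero} to balance the size; the only difference is cosmetic, since the paper absorbs $\eps_\beta$ into the linear decomposition of $r_j$ while you negate the per-line target before complementing.
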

	
	\begin{proof}
		Multiplying spanning vectors by the signs $\eps_\beta$ does not change their spans.  Therefore choose coefficients
		$c_{\beta,j}\in\F_q$ satisfying
		\begin{align}
		r_1
		&=\sum_{\beta\in B}
		\eps_\beta\,\beta c_{\beta,1},
		\label{eq:coefficient-first}\\
		r_j
		&=\sum_{\beta\in B}
		\eps_\beta\,\beta^j c_{\beta,j},
		&&2\le j\le d.
		\label{eq:coefficient-higher}
		\end{align}
		For each~$\beta$, apply \Cref{thm:uniform-completion} to
		$(c_{\beta,1},\ldots,c_{\beta,d})$.  Obtain an $h$-subset
		$S_\beta\subseteq\F_q^\times$ with
		\begin{equation}
		\sum_{a\in S_\beta}a^j=c_{\beta,j},
		\qquad 1\le j\le d.
		\end{equation}
		Define
		\begin{equation}
		Y_\beta=
		\begin{cases}
		\beta S_\beta,&\beta\in B_+,\\
		\beta(\F_q^\times\setminus S_\beta),&\beta\in B_-.
		\end{cases}
		\label{eq:Y-beta}
		\end{equation}
		By \eqref{eq:full-line-zero}, the $j$th moment of $Y_\beta$ is
		$\eps_\beta\beta^jc_{\beta,j}$.  Thus
		$Y=\bigsqcup_\beta Y_\beta$ satisfies \eqref{eq:Y-moments}.
		Its size is
		\begin{align}
		|Y|
		&=\frac{R}{2}h
		+\frac{R}{2}(q-1-h) \\
		&=\frac{R(q-1)}{2}
		=\frac{|D|}{2}.
		\label{eq:Y-size-calculation}
		\end{align}
	\end{proof}
	
	The theorem is the precise universal-completion object needed by the reduction.  The pool~$D$ is constructed explicitly.  The subsets $S_\beta$, and hence~$Y$, are not constructed by the reduction; they are part of the existential NP witness in the completeness proof.
	
	\section{The intermediate moments-subset-sum hardness reduction}
	\label{sec:hardness}
	
	We now give the complete reduction from Exact 1-in-3-SAT.
	
	\subsection{Construction of the MSS instance}
	\label{subsec:mss-construction}
	
	\paragraph{Preprocessing and fixed constants.}
	Fix a rational $0<\alpha<1/2$.  Choose once and for all a positive rational $\rho$ and positive integers $\nu,C$ satisfying
	\begin{equation}
	\alpha<\frac{\nu}{1+\nu(2+\rho)},
	\qquad
	C>2+\frac{2}{\rho}.
	\label{eq:effective-reduction-constants}
	\end{equation}
	Such choices exist by first taking $\rho$ small enough and then $\nu$ large enough; see \Cref{subsec:size-running-time}.
	Starting from a nonempty instance of the cubic planar monotone problem in \Cref{subsec:vector-subset-sum}, take the conjunction of disjoint renamed copies of that instance until the number of variables plus clauses is at least a fixed threshold~$t_0$. The new formula is satisfiable if and only if the original formula is satisfiable. This padding preserves monotonicity, three occurrences per variable, planarity of the incidence graph, and equality of the numbers of variables and clauses.  Choose $t_0$ large enough that the preliminary parameter $d=t^{\nu}$ is at least~$2$ and meets the sufficiently-large-$d$ hypothesis of \Cref{thm:uniform-completion}, and that the bounds
	$\lfloor n^\alpha\rfloor\le d$ and
	$\lfloor n^\alpha\rfloor\le(n-1)/2$ hold for every output of the construction below.  The explicit error bounds in \Cref{thm:uniform-completion} and the size estimates in \Cref{subsec:size-running-time} give an effective such threshold depending only on the fixed constants.  For inputs below the fixed threshold, padding adds only a bounded number of variables and clauses. It therefore takes polynomial time, including on the finitely many small inputs.
	
	Let
	\begin{equation}
	\varphi
	=
	\bigwedge_{r=1}^{m}
	\operatorname{ONE}_3
	\bigl(z_{i(r,1)},z_{i(r,2)},z_{i(r,3)}\bigr)
	\label{eq:source-one-in-three}
	\end{equation}
	be the resulting cubic planar monotone Exact 1-in-3-SAT instance on $m$ variables $z_1,\ldots,z_m$ and $m$ clauses. Put
	\begin{equation}
	t=2m.
	\label{eq:source-vector-dimension}
	\end{equation}
	The reduction produces an MSS instance over a finite extension field~$\Kfield$.  Its design has two logically separate parts:
	\begin{enumerate}
		\item the first moment, after applying a quotient map, certifies a satisfying Boolean assignment;
		\item a universal auxiliary pool realizes all remaining moment residuals without affecting that quotient equation.
	\end{enumerate}
	We now describe every part of the output explicitly.
	
	\paragraph{Step 1: encode the Boolean choices.}
	For each $(i,b)\in[m]\times\{0,1\}$, form the vector
	$v_{i,b}\in\F_q^t$ from \eqref{eq:boolean-choice-vector}.  Its variable part records the choice $z_i=b$, while its clause part records which clause occurrences become true under that choice.  Let
	\begin{equation}
	T=(\one_m;\one_m)\in\F_q^t.
	\label{eq:reduction-vector-target}
	\end{equation}
	By Proposition~\ref{prop:one-in-three-vector-sum}, a subset of the vectors $v_{i,b}$ sums to~$T$ if and only if it selects exactly one value for every variable and that assignment makes exactly one variable true in every clause.
	
	\paragraph{Step 2: choose the moment parameter and the base field.}
	Set $d=t^{\nu}\ge2$ and $h=Cd$, using the constants fixed in preprocessing.  The parameter~$d$ is large enough for \Cref{thm:uniform-completion}.  The auxiliary number~$h$ is used only in the completeness proof through that theorem; it is not part of the output instance.  Put
	\begin{equation}
	Q_0
	=
	\max\left\{
	1+\left\lceil\frac{d(d+1)}2\right\rceil,
	\left\lceil d^{2+\rho}\right\rceil,
	h+1
	\right\}.
	\label{eq:prime-search-start}
	\end{equation}
	Since $\rho$ is rational, the ceiling of $d^{2+\rho}$ can be computed exactly by integer powers and integer-root comparisons.
	Choose a prime~$q$ satisfying
	\begin{equation}
	Q_0\le q\le2Q_0.
	\label{eq:q-bertrand}
	\end{equation}
	Bertrand's postulate guarantees existence. Test each integer in this interval by trial division up to its square root. This uses $O(Q_0^{3/2})$ integer divisions on numbers of polynomial bit length. Since $Q_0$ is polynomial in the source size, the scan takes deterministic polynomial time.  The inequalities in \eqref{eq:prime-search-start} ensure
	\begin{equation}
	q>h>4,
	\qquad
	q>\frac{d(d+1)}2,
	\qquad
	q\ge d^{2+\rho}.
	\label{eq:q-required-properties}
	\end{equation}
	Together with primality of~$q$ and the lower bound on~$d$ ensured by preprocessing, these are the hypotheses needed by the Boolean encoding, the deterministic power condenser, and the elementary uniform completion theorem.
	
	\paragraph{Step 3: construct the extension field and the quotient map.}
	Using the deterministic procedure in \Cref{subsec:explicit-extension-field}, construct
	\begin{equation}
	\Kfield=\F_{q^{2t+1}}
	\label{eq:reduction-extension-field}
	\end{equation}
	with an explicit power-basis representation.  Inside~$\Kfield$, define
	\begin{equation}
	H
	=
	\Span_{\F_q}
	\{1,\theta,\ldots,\theta^t\}
	\label{eq:reduction-H}
	\end{equation}
	and the quotient map
	\begin{equation}
	\pi:\Kfield\longrightarrow\F_q^t
	\label{eq:reduction-quotient}
	\end{equation}
	from \eqref{eq:quotient-isomorphism}.  It satisfies
	\begin{equation}
	\ker\pi=H,
	\qquad
	\Span_{\F_q}(H\cdot H)=\Kfield.
	\label{eq:reduction-quotient-properties}
	\end{equation}
	The first identity will protect soundness: every auxiliary element is placed in~$H$ and therefore vanishes under~$\pi$.  The second identity is the algebraic source of enough higher powers to complete arbitrary residual moments.
	
	\paragraph{Step 4: lift the Boolean vectors to Boolean-choice elements.}
	Let
	\begin{equation}
	\sigma:\F_q^t\longrightarrow\Kfield
	\end{equation}
	be the explicit linear section of~$\pi$ from \Cref{subsec:explicit-extension-field}.  For every Boolean choice $(i,b)$, set
	\begin{equation}
	\ell_{i,b}
	=
	\sigma(v_{i,b}).
	\label{eq:reduction-choice-lift}
	\end{equation}
	Because $\sigma$ is injective and the vectors $v_{i,b}$ are pairwise distinct, all $2m$ Boolean-choice elements $\ell_{i,b}$ are pairwise distinct.  By construction,
	\begin{equation}
	\pi(\ell_{i,b})=v_{i,b}.
	\label{eq:reduction-choice-projection}
	\end{equation}
	Moreover, $v_{i,b}\ne0$, so every Boolean-choice element lies outside~$H$.
	
	\paragraph{Step 5: construct the auxiliary completion pool.}
	Use Lemma~\ref{lem:deterministic-directions} to construct a projectively distinct direction set $B\subset H\setminus\{0\}$ of even cardinality $R\le L+1$.  Form the disjoint union of nonzero base-field lines
	\begin{equation}
	D
	=
	\bigsqcup_{\beta\in B}
	\beta\F_q^\times.
	\label{eq:reduction-auxiliary-pool}
	\end{equation}
	Every element of~$D$ belongs to~$H$, and hence every subset $Y\subseteq D$ satisfies
	\begin{equation}
	\pi\!\left(\sum_{y\in Y}y\right)=0.
	\label{eq:reduction-pool-invisible}
	\end{equation}
	On the other hand, \Cref{thm:balanced-completion} guarantees that for every residual tuple
	\begin{equation}
	r_1\in H,
	\qquad
	r_2,\ldots,r_d\in\Kfield,
	\label{eq:reduction-residual-tuple}
	\end{equation}
	there is a subset $Y\subseteq D$ of the fixed size $|D|/2$ satisfying
	\begin{equation}
	\sum_{y\in Y}y^j=r_j,
	\qquad
	1\le j\le d.
	\label{eq:reduction-residual-completion}
	\end{equation}
	The reduction constructs the pool~$D$, but it does not need to find~$Y$.  In a YES instance, the appropriate set~$Y$ is part of the NP witness.
	
	\paragraph{Step 6: define the MSS instance.}
	The MSS universe is the disjoint union
	\begin{equation}
	A
	=
	\{\ell_{i,b}:i\in[m],\ b\in\{0,1\}\}
	\sqcup D.
	\label{eq:MSS-universe}
	\end{equation}
	It is disjoint because every Boolean-choice element lies outside~$H$, whereas $D\subset H$.  The Boolean-choice elements are pairwise distinct because $\sigma$ is injective, and the auxiliary lines are pairwise disjoint because the directions in~$B$ are projectively distinct.  In particular, $0\notin A$.
	
	Use the explicit section to choose the first-moment target
	\begin{equation}
	M_1=\sigma(T),
	\qquad
	\pi(M_1)=T:=(\one_m;\one_m)\in\F_q^t,
	\label{eq:M1-lift}
	\end{equation}
	and set
	\begin{equation}
	M_2=\cdots=M_d=0.
	\label{eq:higher-targets-zero}
	\end{equation}
	Finally, define
	\begin{equation}
	n=|A|+1=2m+|D|+1.
	\label{eq:output-code-length}
	\end{equation}
	Thus $n$ is already the block length of the Reed--Solomon instance obtained in the next reduction.  Require an MSS subset of size
	\begin{equation}
	k
	=
	m+\frac{|D|}{2}
	=
	\frac{n-1}{2}.
	\label{eq:n-and-k}
	\end{equation}
	The quantity $|D|=|B|(q-1)$ is even because $|B|$ is even and $q$ is odd.
	
	The intended witness now has a transparent form: it contains the $m$ Boolean-choice elements specified by a satisfying assignment and exactly half of the auxiliary pool.  Projecting its first-moment equation by~$\pi$ discards the auxiliary half and recovers the vector equation of \Cref{prop:one-in-three-vector-sum}.  Conversely, once a satisfying assignment has been selected, its first-moment residual lies in~$H$, while its higher-moment residuals are arbitrary elements of~$\Kfield$; \Cref{thm:balanced-completion} supplies the required auxiliary half.  The next subsection formalizes these two directions.
	
	\subsection{Correctness}
	
	\begin{theorem}[Reduction correctness]
		\label{thm:reduction-correctness}
		The formula~$\varphi$ is satisfiable if and only if the MSS instance
		\begin{equation}
		(A,k,M_1,\ldots,M_d)
		\label{eq:constructed-instance}
		\end{equation}
		has a solution.
	\end{theorem}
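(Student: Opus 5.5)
The proof splits into the two directions, completeness and soundness, following the design described at the end of \Cref{subsec:mss-construction}.

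\emph{Completeness.} Suppose $b=(b_1,\ldots,b_m)$ satisfies $\varphi$ in the exact 1-in-3 sense, and put $L_b=\{\ell_{i,b_i}:i\in[m]\}$. This set has $m$ distinct elements by Step~4. By Proposition~\ref{prop:one-in-three-vector-sum} and linearity of $\pi$ with $\pi\circ\sigma=\mathrm{id}$,
\begin{equation}
\pi\Bigl(\sum_{x\in L_b}x\Bigr)=\sum_{i}v_{i,b_i}=T=\pi(M_1).
\end{equation}
Hence the first residual
\begin{equation}
r_1=M_1-\sum_{x\in L_b}x
\end{equation}
lies in $\ker\pi=H$. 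The higher residuals are
\begin{equation}
r_j=-\sum_{x\in L_b}x^j\in\Kfield,\qquad 2\le j\le d,
\end{equation}
and these are unconstrained. The hypotheses of \Cref{thm:uniform-completion} hold: $d$ is large by preprocessing, $q\ge d^{2+\rho}$ is prime, and $h=Cd$ with $C>2+2/\rho$. The hypotheses of \Cref{thm:balanced-completion} also hold, since $B$ is even, projectively distinct and satisfies \eqref{eq:B-first-span}--\eqref{eq:B-higher-span} by Lemma~\ref{lem:deterministic-directions}. \Cref{thm:balanced-completion} therefore gives $Y\subseteq D$ with $|Y|=|D|/2$ and $\sum_{y\in Y} y^j=r_j$ for $1\le j\le d$. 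Set $S=L_b\sqcup Y$. This union is disjoint because $L_b\cap H=\emptyset$ while $D\subset H$. Then $|S|=m+|D|/2=k$, and $\sum_{x\in S}x^j=M_j$ for all $j$, using $M_j=0$ for $j\ge2$.

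\emph{Soundness.} Let $S\subseteq A$ be any solution, and write $S=I'\sqcup Y$ with $I'$ the Boolean-choice part and $Y\subseteq D$. Let $I=\{(i,b):\ell_{i,b}\in I'\}$; this index set is well defined since the $\ell_{i,b}$ are distinct. We use only the first moment equation and apply $\pi$. By \eqref{eq:reduction-pool-invisible} the auxiliary part vanishes, which gives
\begin{equation}
\sum_{(i,b)\in I}v_{i,b}=T.
\end{equation}
Proposition~\ref{prop:one-in-three-vector-sum} applies because $q$ is a prime with $q>4\ge3$. It yields exactly one pair $(i,b_i)$ in $I$ per variable, and the resulting assignment satisfies $\varphi$. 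The cardinality constraint and the higher moments are not needed in this direction.

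I do not expect a serious obstacle, since the statement is essentially an assembly of earlier results. The points that need care are: checking that every hypothesis of \Cref{thm:balanced-completion} is actually met by the constructed parameters, namely $d<q-1$ for \eqref{eq:full-line-zero}, and $h<q-1$ so that the complement sets $\F_q^\times\setminus S_\beta$ make sense; confirming that $r_1\in H$ is exactly what the first-span condition \eqref{eq:B-first-span} can handle; and confirming disjointness of $L_b$ and $Y$, so that $S$ is a genuine subset of the required size.
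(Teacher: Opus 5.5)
Your proposal is correct and matches the paper's proof step for step. Completeness takes the lifts of a satisfying assignment, observes that the first residual lies in $H=\ker\pi$, and applies the balanced universal completion theorem to obtain a half-size $Y\subseteq D$; soundness projects only the first moment equation by $\pi$ and invokes the vector subset-sum proposition. Your additional checks (that $d<q-1$, that the lifts and $D$ are disjoint, and that $q$ is a prime with $q\ge 3$) are conditions the paper already secures in the construction rather than repeating them inside the proof.
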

	
	\begin{proof}[Completeness]
		Suppose~$\varphi$ has an Exact 1-in-3 satisfying assignment.  Select the corresponding Boolean-choice lift for each variable, and call the resulting $m$-element set~$L_0$.  By \eqref{eq:vector-subset-sum} and \eqref{eq:reduction-choice-projection},
		\begin{equation}
		\pi\!\left(
		M_1-\sum_{\ell\in L_0}\ell
		\right)=0.
		\end{equation}
		Therefore the first-moment residual
		\begin{equation}
		r_1=M_1-\sum_{\ell\in L_0}\ell
		\label{eq:first-residual}
		\end{equation}
		lies in~$H$.  For $2\le j\le d$, define
		\begin{equation}
		r_j
		=M_j-\sum_{\ell\in L_0}\ell^j
		=-\sum_{\ell\in L_0}\ell^j
		\in\Kfield.
		\label{eq:higher-residual}
		\end{equation}
		The second equality uses $M_j=0$ from
		\eqref{eq:higher-targets-zero}.
		By \Cref{thm:balanced-completion}, there is a subset
		$Y\subseteq D$ of size $|D|/2$ whose moments are
		$r_1,\ldots,r_d$.  Then
		\begin{equation}
		S=L_0\sqcup Y
		\end{equation}
		has size~$k$ and realizes the targets
		$M_1,\ldots,M_d$.
	\end{proof}
	
	\begin{proof}[Soundness]
		Suppose $S\subseteq A$ is any solution.  Apply the quotient map to its first-moment equation.  Every selected auxiliary element disappears by \eqref{eq:quotient-zero}, so
		\begin{equation}
		\sum_{\ell_{i,b}\in S}
		v_{i,b}=T.
		\label{eq:soundness-quotient}
		\end{equation}
		By Proposition~\ref{prop:one-in-three-vector-sum}, the selected Boolean-choice lifts choose exactly one value for every variable and make exactly one variable true in every clause.  Hence~$\varphi$ is satisfiable.
		
		No property of the selected auxiliary subset is used.  In particular, soundness does not require it to contain $h$ elements from positive lines or complements of $h$-sets from negative lines.  Those structures are used only to prove completeness.
	\end{proof}
	
	\subsection{Size and running time}\label{subsec:size-running-time}
	
	The extension degree is $L=2t+1=4m+1$.  By
	Lemma~\ref{lem:deterministic-directions},
	\begin{equation}
	R=|B|=L+1=2t+2.
	\label{eq:R-linear-size}
	\end{equation}
	Using $q=O(d^{2+\rho})$ and \eqref{eq:output-code-length}, the eventual Reed--Solomon block length satisfies
	\begin{equation}
	n
	=2m+R(q-1)+1
	=O\!\left(t\,d^{2+\rho}\right).
	\label{eq:n-deterministic}
	\end{equation}
	
	For the fixed rational $0<\alpha<1/2$, choose a positive rational $\rho$ such that
	\begin{equation}
	\alpha<\frac{1}{2+\rho}.
	\end{equation}
	Set $d=t^{\nu}$ for a sufficiently large fixed integer~$\nu$.  From
	\eqref{eq:n-deterministic},
	\begin{equation}
	\liminf_{t\to\infty}
	\frac{\log d}{\log n}
	\ge
	\frac{\nu}{1+\nu(2+\rho)}.
	\label{eq:deterministic-exponent}
	\end{equation}
	The right-hand side approaches $1/(2+\rho)$ as $\nu\to\infty$, so choose~$\nu$ such that this ratio exceeds~$\alpha$, as required in \eqref{eq:effective-reduction-constants}.  The construction then has at least
	$\lfloor n^\alpha\rfloor$ moment equations for all sufficiently large~$t$, uniformly over source formulas of that size.  Likewise, $\lfloor n^\alpha\rfloor\le(n-1)/2$ for all sufficiently large output lengths.  Both thresholds are included in the preprocessing choice of~$t_0$.
	
	Write $\alpha=a/b$ with fixed positive integers $a,b$.  Compute $\lfloor n^\alpha\rfloor$ exactly by binary search for the largest nonnegative integer~$D$ satisfying $D^b\le n^a$.  This uses polynomial-time integer arithmetic.  Retain only the first $\lfloor n^\alpha\rfloor$ equations and denote this final number of moments by~$d$.  Completeness is preserved, and soundness still follows from the first moment alone.  This establishes the required number of moments in \Cref{thm:main-mss}; the alphabet-size assertion is verified below.
	
	By \Cref{subsec:explicit-extension-field}, the field, its power basis, and all required arithmetic are constructed deterministically in polynomial time.  Equation~\eqref{eq:field-element-size} also shows that the total bit length of the output is polynomial in the input length.
	
	\paragraph{Alphabet size compared with the evaluation set.}
	Write $d_{\mathrm{pre}}=t^{\nu}$ for the number of moments before truncation. By the prime selection in \eqref{eq:prime-search-start}--\eqref{eq:q-bertrand},
	\begin{equation}
	q=\Theta(d_{\mathrm{pre}}^{2+\rho})
	=\Theta(t^{\nu(2+\rho)}).
	\end{equation}
	Since $t=2m$, the direction construction has exactly $R=2t+2=4m+2$ directions. Thus
	\begin{equation}
	n=2m+(4m+2)(q-1)+1
	=\Theta(tq)
	=\Theta\!\left(t^{1+\nu(2+\rho)}\right).
	\label{eq:exact-length-field-comparison}
	\end{equation}
	Put
	\begin{equation}
	\eta=\frac{1}{1+\nu(2+\rho)}\in(0,1).
	\label{eq:alphabet-exponent}
	\end{equation}
	The alphabet size $Q=|\Kfield|$ therefore satisfies
	\begin{equation}
	Q=q^{2t+1},
	\qquad
	\log_2 Q
	=(2t+1)\log_2q
	=\Theta(t\log t)
	=\Theta(n^\eta\log n)
	=o(n).
	\label{eq:subexponential-alphabet}
	\end{equation}
	Consequently $Q=2^{\Theta(n^\eta\log n)}=2^{o(n)}$, as claimed in \Cref{thm:main-mss,thm:main-bdd}. Truncating the moment list changes none of $A,n,q,L$, and the reduction to Reed--Solomon decoding uses the same field. The alphabet remains superpolynomial in~$n$, and the evaluation set occupies a vanishing fraction of the field. Nevertheless, each field element requires only $\Theta(n^\eta\log n)$ bits in the specified representation, consistently with the polynomial output-size bound above.
	
	\section{Proof of the Reed--Solomon BDD theorem}
	\label{sec:rs-consequences}
	
	\begin{proof}[Proof of \Cref{thm:main-bdd}]
		Fix a rational $0<\alpha<1/2$.  By \Cref{thm:main-mss}, there is a deterministic polynomial-time reduction from Exact 1-in-3-SAT to an MSS instance over an explicitly represented extension field in which the input set has size $n-1$, the required subset size is
		\begin{equation}
		k=\frac{n-1}{2},
		\end{equation}
		and the number of moments is
		\begin{equation}
		d=\lfloor n^\alpha\rfloor.
		\end{equation}
		The field characteristic exceeds~$d$, and its size satisfies \eqref{eq:subexponential-alphabet}.
		
		Apply \Cref{prop:mss-to-rs}.  The resulting Reed--Solomon code has block length~$n$ and dimension
		\begin{equation}
		K=k-d+1
		=\frac{n+1}{2}-d.
		\label{eq:code-dimension-final}
		\end{equation}
		The decoding radius is
		\begin{align}
		(n-K)-d
		&=n-\left(\frac{n+1}{2}-d\right)-d \\
		&=\frac{n-1}{2}.
		\label{eq:code-radius-final}
		\end{align}
		The covering radius is
		\begin{equation}
		n-K=\frac{n-1}{2}+d,
		\end{equation}
		so the requested radius is exactly~$d$ below it.  Equivalently, one asks whether a polynomial of degree less than~$K$ agrees with the received word in at least
		\begin{equation}
		K+d=\frac{n+1}{2}
		\label{eq:agreement-threshold-final}
		\end{equation}
		positions.
		
		By \Cref{prop:mss-to-rs}, the received word is within this radius if and only if the MSS instance is satisfiable.  This proves NP-hardness.  Membership in NP follows because a polynomial of degree less than~$K$ is a polynomial-size certificate whose evaluation and Hamming distance can be checked in deterministic polynomial time.  Finally,
		\begin{equation}
		\frac{K}{n}
		=\frac{(n+1)/2-d}{n}
		\longrightarrow\frac12,
		\end{equation}
		since $d=o(n)$.
	\end{proof}
	
	\section{Why the proof needs extension fields}
	\label{sec:prime-field}
	
	The role of the extension field should be separated from the role of the prime subfield.
	
	\paragraph{What the prime field does.}
	The completion lemma is a statement over~$\F_q$.  Every direction line
	$D_\beta=\beta\F_q^\times$ is parametrized by base-field scalars, and
	the point counts produce the desired subsets of~$\F_q^\times$.
	
	\paragraph{What the extension field does.}
	The extension supplies a proper additive subspace~$H$ satisfying two incompatible-looking requirements:
	\begin{equation}
	\dim(\Kfield/H)=t>0,
	\qquad
	\Span(H\cdot H)=\Kfield.
	\label{eq:two-H-requirements}
	\end{equation}
	The quotient stores the Boolean subset-sum instance.  The product-spanning property lets higher powers of auxiliary directions correct arbitrary residuals in the full field.
	
	\paragraph{Why the same quotient construction cannot work over a prime field.}
	View a prime field $\F_p$ as a vector space over itself.  Its only linear subspaces are
	\begin{equation}
	H=0
	\qquad\text{and}\qquad
	H=\F_p.
	\end{equation}
	If $H=0$, there is no nonzero auxiliary pool.  If $H=\F_p$, then the quotient is zero and the first moment carries no Boolean information.  Thus no $\F_p$-linear quotient construction can simultaneously hide all auxiliaries and retain a nontrivial soundness equation.
	
	One might instead regard $\F_p$ as a vector space over a proper subfield, but a prime field has no proper subfields.  This is the exact point at which the argument fails.  It does not prove that polynomial-moment hardness over prime fields is false; it shows only that a qualitatively different soundness mechanism is required.  The magnitude-bimodality method of~\cite{GandikotaGhaziGrigorescu2018} is one such prime-field mechanism, but it currently incurs an exponential-in-$d$ gadget size.
	
	\section{Comparison, open problems, and perspective}
	\label{sec:perspective}
	
	\subsection{Coding-theoretic comparison with earlier results}
	
	For an $[n,K]$ Reed--Solomon code, write
	\begin{equation}
	d=(n-K)-\tau
	\end{equation}
	for the gap between the covering radius and the decoding radius~$\tau$.  The relevant algorithmic and hardness benchmarks are summarized below.  The full-length discrete-logarithm reductions concern search decoding; the NP-completeness results concern the threshold decision problem for constructed evaluation sets.
	
	\begingroup
	\hbadness=10000
	\begin{center}
		\small
		\renewcommand{\arraystretch}{1.12}
		\begin{tabularx}{\textwidth}{@{}
				>{\RaggedRight\arraybackslash}p{0.18\textwidth}
				>{\RaggedRight\arraybackslash}p{0.23\textwidth}
				>{\RaggedRight\arraybackslash}p{0.17\textwidth}
				Y@{}}
			\toprule
			Result & Decoding regime & Field model & Significance \\
			\midrule
			Guruswami--Sudan~\cite{GuruswamiSudan1999}
			& Polynomial-time list decoding up to approximately $n-\sqrt{n(K-1)}$
			& Arbitrary finite fields admitting the evaluation set
			& Standard Johnson-radius algorithmic benchmark. \\
			\addlinespace
			Brakensiek et al.~\cite{Br+26}
			& Relative radius $1-\eps$ for sufficiently small fixed $\eps$ and $R\le(1-\theta)\eps$
			& Prime fields of size $O(n)$; arbitrary evaluation sets
			& Polynomial-time list decoding beyond the Johnson radius at low constant rates; fixed $\theta\in(0,1)$. \\
			\addlinespace
			Guruswami--Vardy; Cheng--Murray~\cite{GuruswamiVardy2005,ChengMurray2007}
			& $d=1$
			& Constructed evaluation sets in large fields
			& NP-completeness of non-deep-hole recognition; the complementary deep-hole problem is coNP-complete. \\
			\addlinespace
			Cheng--Wan~\cite{ChengWan2007,ChengWan2010}
			& $d=h$; additive gaps growing as a fixed positive power of the length in suitable families, including positive rates
			& Standard primitive and extended full-length codes
			& Randomized reductions from extension-field discrete logarithms to search decoding. \\
			\addlinespace
			GGG (2018)~\cite{GandikotaGhaziGrigorescu2018}
			& $1\le d\le c\log n/\log\log n$ for an absolute $c>0$
			& Prime fields of size $2^{\poly(n)}$; constructed sets; constant rate
			& Deterministic NP-hardness with an asymptotically growing gap below the covering radius. \\
			\addlinespace
			Present result
			& $d=\Theta(n^\alpha)$ for every fixed rational $0<\alpha<1/2$
			& Extension fields of size $2^{o(n)}$; constructed sets; rate tending to $1/2$
			& Deterministic NP-hardness at an additive gap of order $n^\alpha$ below the covering radius. \\
			\bottomrule
		\end{tabularx}
	\end{center}
	\endgroup
	
	The construction combines an additive gap of order $n^\alpha$ with an explicitly subexponential alphabet, as quantified in \eqref{eq:subexponential-alphabet}. For comparison, the earlier growing-gap theorem~\cite{GandikotaGhaziGrigorescu2018} is stated over fields of size $2^{\poly(n)}$. The present bound concerns the full extension field $\Kfield$, not merely its prime subfield~$\F_q$.
	
	The advance is additive rather than multiplicative: the hard decoding radius remains $n-K-o(n)$, so the result does not approach the Johnson radius in relative terms.  Within the deterministic NP-hardness framework, it replaces the previous polylogarithmic gap below the covering radius by an additive gap of order $n^\alpha$ over extension fields.  The evaluation set remains part of the constructed instance; NP-hardness for standard primitive or extended full-length Reed--Solomon codes is not established here.
	
	Moments subset sum is the intermediate algebraic problem that makes this improvement possible.  The earlier reduction requires an explicit, soundness-compatible PTE bundle for every Boolean alternative.  Our reduction constructs one universal pool, delegates the arithmetic completion to the NP witness, and makes every unintended auxiliary subset harmless through an extension-field quotient.  Related work on moments subset sum for structured evaluation sets develops algorithms, point counts, and all-target existence theorems rather than hardness for arbitrary input sets~\cite{LaiMarinoRobinsonWan2020,GottigPerezPrivitelli2024}.
	
	\subsection{The deterministic power condenser}
	
	The linear-size direction set is deterministic.  The essential ingredients are special to the maps
	$x\mapsto x^j$: pure powers span the extension by polarization, affine restrictions admit the exact identity test in
	Lemma~\ref{lem:affine-power-identity-test}, and coordinate fixing encounters at most
	\begin{equation}
	\sum_{j=1}^{d}j
	\end{equation}
	forbidden base-field values at each stage.  The field-size condition
	$q>d(d+1)/2$ is already implied by the point-counting regime
	$q\ge d^{2+\rho}$.
	
	This construction may be useful independently as a deterministic simultaneous rank condenser for power maps in finite-dimensional field algebras.  It is important, however, that the proof exploits the product-spanning property
	$\Span(H\cdot H)=\Kfield$ and the fact that the characteristic exceeds all relevant degrees.
	
	\subsection{Improving the completion field size}
	
	The elementary completion theorem used in the reduction naturally leads to
	$q\ge d^{2+\rho}$ for an arbitrary fixed $\rho>0$.  The stronger higher-dimensional estimate in~\cite{Wan2026} improves the admissible constant multiplying~$d$ in the completion size, but it does not change this field-size exponent in the present reduction.  A stronger sieve or additional cancellation might approach the information-theoretic scale.  Universal completion for all $q^d$ targets requires at least $q^d$ candidate subsets, while a field of size~$q$ has only $2^q$ subsets.  Hence
	\begin{equation}
	2^q\ge q^d,
	\qquad\text{so}\qquad
	q=\Omega(d\log d).
	\label{eq:counting-lower-bound}
	\end{equation}
	This leaves $q=d^{1+o(1)}$ as an optimistic field-size target, subject to further constraints.  In particular, when every completion has the fixed cardinality $h=\lceil Cd\rceil$, the sharper necessary condition is
	\begin{equation}
	\binom{q-1}{h}\ge q^d.
	\label{eq:fixed-cardinality-counting}
	\end{equation}
	For fixed~$C$ and $q=d^{1+o(1)}$, however,
	\begin{equation}
	\log\binom{q-1}{h}
	\le h\log\!\left(\frac{e(q-1)}{h}\right)
	=o(d\log d),
	\end{equation}
	whereas $d\log q=(1+o(1))d\log d$.  Thus reaching this field-size scale would require abandoning the fixed bound $h=O(d)$.  The complement construction in \Cref{thm:balanced-completion} can accommodate a different common cardinality~$h$, provided the required all-target completion theorem holds.
	
	There is also an independent obstruction in the direction construction: Lemma~\ref{lem:deterministic-directions} requires
	$q>d(d+1)/2$.  Consequently, improving the completion theorem alone does not remove the quadratic field-size requirement.  A route toward hardness for every fixed rational $0<\alpha<1$ would require both a completion theorem over smaller base fields with suitable cardinalities and a deterministic direction construction valid in that regime.  For example, if both were available with $q=d^{1+o(1)}$ and $|B|=O(t)$, the resulting bound $n=O(t\,d^{1+o(1)})$ would support the same parameter argument for such exponents.  Neither improvement is proved here.
	
	\subsection{An alternative soundness construction over prime fields}
	
	The most coding-theoretically direct open problem is to recover polynomial-moment hardness over prime fields.  Any replacement must allow all auxiliary subsets to be sound while retaining enough multiplicative freedom to correct higher moments.  Possible directions include:
	\begin{itemize}
		\item a code quotient spread across several prime-field coordinates;
		\item a ring or module quotient construction followed by an embedding into a field;
		\item a positional construction with algebraic, rather than magnitude-only, carry prevention;
		\item a structured family of signed or matrix-valued completion measurements.
	\end{itemize}
	None of these is supplied by the present proof.

	%------------------------------------------------
	% Author addresses
	%------------------------------------------------
	
	\bigskip
	
	\noindent
	\textsc{Daqing Wan and Jun Zhang}\\
	Center for Discrete Mathematics,\\ College of Mathematics and Statistics,\\
	Chongqing University, Chongqing 401331, China\\
	Email: \texttt{dwan@uci.edu.cn, jun\_zhang@cqu.edu.cn}

\end{document}